\numberwithin{equation}{section}
\newtheorem{thm}{Theorem}
\newtheorem{lemma}{Lemma}
\newtheorem{cor}{Corollary}
\newcommand{\bea}{\begin{eqnarray}}
\newcommand{\eea}{\end{eqnarray}}
\newcommand{\lv}{\boldsymbol\lambda}
\newcommand{\Lv}{\boldsymbol L}
\newcommand{\Rv}{\boldsymbol R}
\newcommand{\C}{\mathbb{C}}
\newcommand{\E}{\mathbb{E}}
\newcommand{\tr}{\mbox{Tr}}
\newcommand{\G}{\mbox{Gin}(N,\mathbb{C})}
\newcommand{\xb}{\bar{x}}
\newcommand{\ind}{\mathbbm{1}}
\newcommand{\lambdab}{\bar{\lambda}}
\title{On the determinantal structure of conditional overlaps for the complex
Ginibre ensemble}
\author[1]{Gernot Akemann} 
\author[2]{Roger Tribe}
\author[2]{Athanasios Tsareas}
\author[2]{Oleg Zaboronski}
\affil[1]{
Faculty of Physics,
Bielefeld University,
D-33501 Bielefeld, Germany}
\affil[2]{Department of Mathematics, University of Warwick, CV4~7AL
Coventry, UK}
\date{} 
\begin{document}
\maketitle
\begin{abstract}
We continue the study of joint statistics of eigenvectors and eigenvalues initiated in the seminal papers
of Chalker and Mehlig. The principal object of our investigation is the expectation of the matrix of
overlaps between the left and the right eigenvectors for the complex $N\times N$ Ginibre ensemble, conditional 
on an arbitrary number $k=1,2,\ldots$ of 
complex
eigenvalues. These objects provide the simplest generalisation of the expectations
of the diagonal overlap ($k=1$) and the off-diagonal overlap ($k=2$) considered originally by Chalker and Mehlig.
They also appear naturally in the problem of joint evolution of eigenvectors and eigenvalues for Brownian
motions with values in complex matrices studied by the Krakow school.  

We find that these expectations possess a determinantal structure, where the relevant  kernels can be expressed
in terms of certain orthogonal polynomials in the complex plane. Moreover, the kernels admit a rather
tractable expression for all $N \geq 2$. This result enables a fairly straightforward calculation 
of the conditional expectation of the overlap matrix in the 
local
bulk and edge scaling limits as well as the proof
of the exact algebraic decay and asymptotic factorisation of these expectations 
in the bulk.
\end{abstract}
\section{Introduction
and Motivation.}
Let $\G$ be an ensemble of $N\times N$ matrices with independent complex Gaussian entries (complex
Ginibre ensemble): if $M\sim \G$ is a complex Ginibre matrix, then
\bea
\E(M_{ij})=0=\E(\bar{M}_{ij}),~1 \leq i,j\leq N,\\
\E(M_{ij}M_{kl})=0=\E(\bar{M}_{ij}\bar{M_{kl}}),~1 \leq i,j,k,l\leq N,\\
\E(M_{ij}\bar{M}_{kl})=\delta_{ik}\delta_{jl}~1 \leq i,j,k,l\leq N,
\eea
where `$~\bar{}~$' stands for complex conjugation. 
Let $\boldsymbol{\Lambda}^{(N)}=(\Lambda_1, \Lambda_2,\ldots, \Lambda_N)$
be the set of complex eigenvalues of $M$. This ensemble was introduced in 1965 in \cite{ginibre}
along with its real and quaternionic counterparts. It was immediately realised in this 
pioneering paper that the marginal distribution of eigenvalues for $\G$ can be computed
and is a natural generalisation of the corresponding answer for the Gaussian Unitary
Ensemble 
(GUE), cf. \cite{mehta}, 
to the case of 
a
complex spectrum:
\bea\label{cginibrelaw}
p_N\left(\boldsymbol{\Lambda}^{(N)} \in d\lv^{(N)}\right)d\lv^{(N)}=\frac{1}{Z_N} |\Delta^{(N)}(\lv^{(N)})|^2e^{-\sum_{j=1}^N |\lambda_j|^2}d\lv^{(N)}.
\eea
Here $p_N$ is the density for the distribution of the eigenvalues with respect to Lebesgue measure $d\lv^{(N)}=\prod_{k=1}^N d\lambda_k\lambdab_k$ on $\C^N$,
$\Delta^{(N)} (\lv^{(N)})=\prod_{i>j}^N(\lambda_i-\lambda_j)$ is the Vandermonde determinant, 
and
$Z_N= \pi^N\prod_{j=1}^N j!$
is the normalisation constant. As one can see, $p_N$ can be interpreted as the coordinate
distribution function of the classical $\log$-gas in two dimensions, which is parallel to the view of the ensemble of GUE eigenvalues
as the one-dimensional $\log$-gas. It is perhaps this analogy that biased the research into $\G$ towards the study
of its spectral properties, see e.g. 
\cite{mehta,forrester,KS}
for 
reviews of many significant results concerning the spectrum of $\G$. 
Yet, there is an important and nowadays widely appreciated difference between the complex
Ginibre ensemble and GUE:
even though the marginal distribution of eigenvalues for $\G$ can be 
computed analytically, the statistics
of eigenvectors does not decouple from the statistics of eigenvalues. 
Despite this fundamental difference,
it was not until the late nineties that Chalker and Mehlig initiated the
quantitative study of the joint
statistics of eigenvectors and eigenvalues for $\G$. They were motivated by questions of spectral stability
for non-Hermitian random Hamiltonians describing certain quantum or stochastic complex systems, 
a problem which can be traced back to \cite{may}. 
For further motivations from Physics see \cite{BurdaPRL}, \cite{Fyodorov_Savin} and references therein.

Stability questions can be hard to justify within a static setting of the complex Ginibre ensemble, but become
very natural if one considers some kind of dynamics (random or deterministic) on the space of complex matrices. 
Our own interest in the statistics of eigenvectors for $\G$ was inspired by the study of the stochastic dynamics of eigenvectors
and eigenvalues of complex matrices 
initiated by 
Z. Burda and M. Nowak,
their collaborators and students (`The Krakow school'), see 
e.g. \cite{Janik+,BurdaPRL}
and references therein.
So, in order to motivate the main subject of this paper, let us follow \cite{bourgade} and \cite{grela} and 
consider the Brownian motion
$M_t$ with values in $N\times N$ complex matrices started from zero. 
In other words, $(M_t)_{t\geq 0}$ is a Gaussian process with continuous paths, independent increments and the covariance
\bea
\E
\left(\tr\left(A^\dagger M_t^\dagger \right) 
\tr\left(M_s B\right)\right)
=\tr\left(A^\dagger B\right),
\eea
where $A,B$ are arbitrary $N\times N$ complex matrices. The fixed time $t>0$ marginal law for the process 
$(M_t)_{t \geq 0}$ coincides up to rescaling with the complex Ginibre ensemble.

Let $(\Lambda_{t\alpha}, \Lv_{t\alpha}, \Rv_{t\alpha})_{t\geq 0, 1\leq \alpha\leq N}$
be the induced processes, describing the evolution of eigenvalues of $M_t$ and the bi-orthogonal set of corresponding left and right eigenvectors, 
\bea
\Lv^\dagger_{t\alpha} M_t=\Lambda_{t\alpha} \Lv^\dagger_{t\alpha},~1\leq \alpha \leq N,\\
M_t \Rv_{t\alpha} =\Lambda_{t\alpha} \Rv_{t\alpha},~1\leq \alpha \leq N,\\
\langle \Lv_{t\alpha}, \Rv_{t\beta} \rangle=\delta_{\alpha,\beta},~1\leq \alpha,\beta \leq N,
\eea
where `$~\dagger~$' denotes Hermitean conjugation and $\langle\cdot,\cdot\rangle$ stands
for the Hermitean inner product on $\C^N$. As shown in 
\cite{bourgade} and \cite{grela}, the process $(\Lambda_{t \alpha})_{t \geq 0, 1\leq \alpha \leq N}$ is a complex martingale such that
\bea\label{martingale}
d\Lambda_{t\alpha} d\bar{\Lambda}_{t \beta}=O_{t\alpha \beta} dt,
\eea
where 
\bea\label{ovmat}
O_{t\alpha \beta}=\langle \Lv_{t\alpha}, \Lv_{t\beta} \rangle 
\langle \Rv_{t \alpha}, \Rv_{t \beta} \rangle,~1\leq \alpha,\beta \leq N
\eea
is the matrix of the overlaps between the left and the right eigenvectors of $M_t$. 
(It is worth noticing that paper \cite{grela} derives the full 
set of stochastic differential equations for the joint evolution of eigenvalues and eigenvectors of $M_t$  for 
any matrix size $N$.)
Notice that
for complex matrices, the matrix of overlaps is a non-trivial random variable as 
the left and the right eigenvectors are not orthogonal,
\[
\langle \Lv_\alpha, \Lv_\beta \rangle\neq 0, \langle\Rv_\alpha, \Rv_\beta\rangle \neq 0,~1\leq \alpha<\beta \leq N.
\]
As a result, the evolution of eigenvalues for complex matrices is very different from the case of normal matrices
with complex spectrum, despite both models having the same marginal 
distribution of eigenvalues for zero initial conditions.  See 
Appendix
\ref{appendix} for more details on the
dynamics of eigenvalues for normal matrices.

To study the evolution of eigenvalues corresponding to (\ref{martingale}), it is natural to study conditional expectations
\[
\E_N(d\Lambda_{t\alpha} d\bar{\Lambda}_{t \alpha}\mid \Lambda_{t\alpha}=\lambda_\alpha)=
\E_N(O_{t \alpha \alpha}\mid  \Lambda_{t\alpha}=\lambda_\alpha )dt,~ 
1\leq \alpha \leq N, 
\]
and
\[
\E_N(d\Lambda_{t\alpha} d\bar{\Lambda}_{t \beta}\mid \Lambda_{t \alpha}=\lambda_\alpha,
 \Lambda_{t\beta}=\lambda_\beta)=
 \E_N(O_{t\alpha \beta}\mid \Lambda_{t\alpha}=\lambda_\alpha,
 \Lambda_{t\beta}=\lambda_\beta)dt
 , ~1 \leq \alpha \neq \beta \leq N,
 \] 
 where $\E_N(\cdot)$ denotes expectation with respect to $\G$.
 These are the conditional
 expectations of the diagonal and the non-diagonal overlaps originally studied in \cite{chalker1,
 chalker2}.
Furthermore, if we wish to understand the influence of a fixed set of eigenvalues on the evolution 
of a single eigenvalue or a pair of eigenvalues, it is reasonable to consider general conditional expectations
\[
\E_N(O_{t\alpha_1 \alpha_2}\mid \Lambda_{t \alpha_p}=\lambda_{\alpha_p},
 p=1,2, \ldots k)
 , ~1 \leq \alpha_p \leq N, k=1,2,\ldots N.
\]

These are the principal objects studied in the present paper. An additional motivation for our study comes
from the mathematical structure of the answers: we find that conditional expectations of overlaps
are expressed in terms of determinants of matrices built out of a kernel of some integrable operator.
While this structure is 
a
well-known feature of point processes associated with the statistics
of eigenvalues of random matrices, we were unaware of determinantal answers for the
statistics of eigenvectors prior to starting our work. 

Our work continues 
the 
mathematical
study of the statistical properties of eigenvectors of non-Hermitian matrices, 
which has become an active research area during the past few years.
This renewed effort has already yielded a number of significant
generalisations of the original results by Chalker and Mehlig: In a breakthrough
paper \cite{bourgade}, Bourgade and Dubach prove that the law of the diagonal overlap conditional 
on the corresponding eigenvalue is given in the bulk
scaling limit by the inverse Gamma-distribution with parameter $2$. This is a significant generalisation
of the results of Chalker and Mehlig who managed to calculate this distribution for the matrix size $N=2$ only. The
statement follows from a beautiful novel representation of the diagonal overlap conditioned
on all eigenvalues  as a product of independent random variables. The authors also obtain new results
for the variance of the off-diagonal overlaps and the two-point function of diagonal overlaps, establishing in particular the
algebraic decay of the latter as a function of the distance between the corresponding eigenvalues. 

In
a parallel development
\cite{fyodorov}, Fyodorov obtains the full conditional law of the diagonal overlap both for
the complex Ginibre ensemble and the diagonal overlap associated with real eigenvalues for the 
real Ginibre ensemble. It is worth noting that the corresponding problem for the
real Ginibre overlaps associated with complex eigenvalues remains open. Fyodorov's answer 
is valid for $N<\infty$, which allows
him to derive the scaling limits for the distribution of the diagonal overlap both in the bulk and near
the edge of the spectrum as $N\rightarrow \infty$. Of course, the answers of \cite{fyodorov} are 
consistent with that of \cite{bourgade}. The calculations in \cite{fyodorov} are based on a
novel representation of the distribution of the diagonal overlap in terms  of ratios of determinants
and employs the calculus of anti-commuting variables, 
cf. \cite{FGS} for an alternative analytical approach. 

In \cite{walters}, Walters and Starr extend
the answers of \cite{chalker1,
chalker2} for the conditional expectation of the diagonal overlap at $N<\infty$ to any conditioned value
of the corresponding eigenvalue. This allows the authors to calculate
the edge scaling limit for the conditional expectation of the diagonal overlap. It is worth
stressing that our own calculations are based on the same analysis of recursion relations for the determinants of certain
$3$-diagonal moment matrices as in \cite{walters}. We complement it by an exact correspondence between diagonal and off-diagonal overlaps, which allows us to avoid difficulties associated with the analysis of the $5$-diagonal moment matrices.
In \cite{crawford}, Crawford and Rosenthal study high order moments of the overlap matrix (\ref{ovmat}). 
They prove the existence of the bulk scaling limit for the moments and discover a beautiful factorization
relation, 
valid on a macroscopic scale, expressing the moments of an arbitrary order in terms of a linear combination of products of moments
of order two, the structure of which deserves further investigation. 

In an investigation having a slightly
different flavour, the authors of \cite{luh} and \cite{rudelson} prove the delocalisation property of eigenvectors
for ensembles of complex random matrices, which do not necessarily possess unitary invariance.  The delocalisation
property means that the weight of the coefficients is not concentrated in any particular region of the 
index space. In \cite{zeitouni}, the authors study the statistics of angles between the eigenvectors
for 
invariant non-Gaussian
ensembles.

Finally, we must mention the work of the Krakow School, which is at least
partially responsible for the current renaissance of research into the joint statistics of eigenvectors and eigenvalues for
random non-Hermitian matrices. Among its recent contributions most relevant to the 
present work is the derivation of the system of 
stochastic  evolution equations for eigenvalues and eigenvectors,
cf. \cite{grela}, which allowed its authors to 
express the rate of change of eigenvalue correlation functions in terms of conditional expectations
of overlaps. These are precisely the objects studied in the present paper; in \cite{nowak}, its authors
present evidence for the 
microscopic
universality of moments of overlaps by exploiting a perturbative expansion in $N^{-1}$ 
for the calculation of moments for non-Gaussian ensembles of complex matrices.
In contrast, on a macroscopic scale the
eigenvector correlators explicitly depend on the radial spectral cumulative distribution and are thus  non-universal. This is shown in \cite{Belinschi}, combining free probability and the methods of generalised Green’s functions \cite{Janik+}.

The rest of the paper is organised as follows. Section \ref{results} presents our main results
concerning conditional expectations of overlaps: the determinantal representation for $N<\infty$,
the bulk and the edge scaling limits, exact algebraic asymptotic in the bulk for well separated eigenvalues.
Section \ref{proofs} contains the proofs
in the following subsections
: \ref{sec_setup}, \ref{sec_La1} the derivation of the determinantal representation for the
conditional expectations of diagonal and off-diagonal overlaps in terms
of bi-orthogonal polynomials in the complex plane; \ref{sec_heur} a heuristic calculation of the correlation
kernels, which shows how the result of rather complicated calculations of the following sections
can be easily guessed using the assumption of the extended translational invariance; \ref{sec_finiteN}
a rigorous evaluation of correlation kernels for $N<\infty$ in terms of the exponential polynomials;
\ref{proof_bulk} - \ref{proof_decay}
the calculation of  
various
scaling limits as $N\rightarrow \infty$. 
Appendix \ref{appendix}
contains the derivation of Dyson-like stochastic evolution equations for the normal matrix model. 

The methods used in the proofs are rather classical: the determinantal structure is a consequence of
Dyson's theorem reviewed in \cite{mehta} and the product structure of the overlap expectations 
conditioned on all eigenvalues; 
the computation of the correlation kernel for the diagonal overlaps reduces to the inversion
of the tri-diagonal moment matrix using the recursions already encountered in \cite{chalker1,
chalker2} and \cite{walters}; the
calculation of the kernel for the off-diagonals overlaps uses a relation between diagonal and off-diagonal overlaps
established in Lemma \ref{thm_rel} and 
determinantal 
identities, see \cite{tanner} for a review.
\section{Statement and Discussion of Results.
}\label{results}
As already explained in the 
introduction, we will be interested in the joint statistics of the overlaps and eigenvalues of $M \sim \G$.
Namely, we will study the following conditional expectations:
\bea
\E_N(O_{\alpha \alpha}\mid \Lambda_m=\lambda_m,m \in I),~I\subset \{1,2,\ldots,N\}, \alpha \in I\\
\E_N(O_{\alpha \beta}\mid \Lambda_m=\lambda_m, m \in J),~J\subset \{1,2,\ldots,N\}, \alpha, \beta \in J.
\eea
In other words , we consider the expectation of the overlaps with respect to $\G$ measure conditioned
on a set of eigenvalues. To be more concrete, if $M \sim \G$ is parametrised using Schur coordinates,
we compute the expected overlaps with respect to the product measure whose factors are the Haar measure
for the unitary conjugation, a
 Gaussian measure for the upper triangular degrees of freedom, and the eigenvalue measure obtained
 by conditioning (\ref{cginibrelaw}) on a set of eigenvalues. Due to the permutation symmetry of $\G$
 measure, it is sufficient to consider the following expectations:
\bea
\E_N(O_{11}\mid \Lambda_1=\lambda_1,\Lambda_2= \lambda_2, \ldots \Lambda_k=\lambda_k),~k=1,2,\ldots,N,\\
\E_N(O_{12}\mid \Lambda_1=\lambda_1,\Lambda_2= \lambda_2, \ldots \Lambda_k=\lambda_k),~k=2,\ldots,N.
\eea
Closely associated with these expectations are the following weighted multi-point intensities of the eigenvalues:
\bea\label{eq_d11}
D_{11}^{(N,k)}(\lv^{(k)}):=\E_N(O_{11}\mid \Lambda_1=\lambda_1
,\ldots, \Lambda_k=\lambda_k)\rho^{(N,k)}(\lv^{(k)}),\\~k=1,2,\ldots,N,\nonumber
\eea 
and 
\bea\label{eq_d12}
D_{12}^{(N,k)}(\lv^{(k)}):=\E_N(O_{12}\mid \Lambda_1=\lambda_1,
\ldots, \Lambda_k=\lambda_k)
\rho^{(N,k)}(\lv^{(k)}),\\~k=2,\ldots,N,\nonumber
\eea
where $\lv^{(k)}=(\lambda_1,\lambda_2, \ldots, \lambda_k)$ and $\rho^{(N,k)}$ is the $k$-point correlation function (Lebesgue density for factorial moments) for $\G$ eigenvalues. Recall that
\bea\label{cgin1}
\rho^{(N,k)}(\lv^{(k)}):=\frac{N!}{(N-k)!} \int_{\C^{N-k}} \prod_{m=k+1}^N d\lambda_m d\lambdab_m p_{N}(\lv^{(N)})=\det_{1\leq i,j\leq N}
\left(K^{(N)}_{ev}(\lambda_i,\lambda_j)
\right)
,
\eea
where 
\bea\label{cgin2}
K^{(N)}_{ev}(x,y)=\frac{1}{\pi}e^{-|x|^2}\sum_{m=0}^{N-1}\frac{(\bar{x}y)^m}{m!}
\eea
is the kernel of the determinantal point process corresponding to the distribution of $\G$ eigenvalues, see
\cite{ginibre} and \cite{mehta} for the derivation of (\ref{cgin1}) and (\ref{cgin2}).
\footnote{The kernel (\ref{cgin2}) can be re-written in a more symmetric form 
$\frac{1}{\pi}e^{-\frac{1}{2}(|x|^2+|y|^2)}\sum_{m=0}^{N-1}\frac{(\bar{x}y)^m}{m!}$ by the conjugation
$K_{ev}(x,y)\rightarrow e^{\frac{1}{2}x^2}K_{ev}(x,y)e^{-\frac{1}{2}y^2}$, which does not change the correlation functions.}

For the sake of brevity, we will refer to the expectations (\ref{eq_d11}) and (\ref{eq_d12}) as conditional overlaps. 
Notice that
\bea
D_{11}^{(N,1)}(\lambda)=\E_N\left(\sum_{\alpha=1}^N O_{\alpha \alpha}\delta(\Lambda_\alpha-\lambda)\right),\\
D_{12}^{(N,2)}(\lambda,\mu)=\E_N\left(\sum_{\alpha\neq \beta=1}^N O_{\alpha \beta}\delta(\Lambda_\alpha-\lambda)\delta(\Lambda_\beta-\mu)\right),
\eea
coincide with the expectations of diagonal and off-diagonal elements of the 
overlap matrix studied by Chalker and Mehlig: compare $D_{11}^{(N,1)}(\lambda)$ and 
$D_{12}^{(N,2)}(\lambda,\mu)$ with equations (10) and (11) 
of \cite{chalker2} evaluated at $\sigma=1$.

Our starting point is the fundamental result of \cite{chalker1,
chalker2} for the overlaps conditioned
on $all$ eigenvalues:
\bea\label{cmgenius11}
D_{11}^{(N,N)}(\lv^{(N)})=N!\prod_{k
=2}^N \left(1+\frac{1}{|\lambda_1-\lambda_k|^2}\right)p_N(\lv^{(N)}),\\
\label{cmgenius12}
D_{12}^{(N,N)}(\lv^{(N)})=- \frac{N!}{|\lambda_1-\lambda_2|^2}\prod_{k
=3}^N \left(1+\frac{1}{(\lambda_1-\lambda_k)
\left(\bar{\lambda}_2-\bar{\lambda}_k\right)}\right)p_N(\lv^{(N)}),
\eea
see equations (43) and (46) of \cite{chalker2}. Thus the task of integrating over the unitary and upper
triangular coordinates 
has already been accomplished by Chalker and Mehlig and we can concentrate on computing
the expectation of $D_{11}^{(N,N)}$, $D_{12}^{(N,N)}$ with respect to the conditional eigenvalue measure. 

The study of conditional expectations of overlaps is further simplified due to a simple relation between 
$D_{11}$
and $D_{12}$. In order to state this simple relation, we will treat $\{\lambda_i, \lambdab_i\}_{1\leq i\leq k}$
as independent complex variables and therefore treat conditional overlaps as functions on $\C^{2k}$. 
To obtain the final answer we will specialize to the real surface $\C^{k}\subset \C^{2k}$ by treating
$\lambda_i$ as the complex conjugate of $\lambdab_i$ for $1\leq i \leq k$. As a slight abuse of notation,
we will always refer to the value of the overlap at a point as $D_{12}^{(N,k)}(\lv^{(k)})$.
Let $\hat{T}$ be the following transposition acting on functions on $\C^{2k}$, $k \geq 2$:
\bea
\hat{T}f(\lambda_1, \bar{\lambda}_1, \lambda_2, \bar{\lambda}_2,\ldots)
=f(\lambda_1, \bar{\lambda}_2, \lambda_2, \bar{\lambda}_1,\ldots),
\eea
leaving the remaining variables $\lambda_3, \lambdab_3,\ldots \lambda_k,\lambdab_k$ untouched.
We have the following
\begin{lemma} (Exact relation between diagonal and off-diagonal overlaps for $N<\infty$.)\label{thm_rel}
For any $2\leq k \leq N<\infty$, the functions $D_{11}^{(N,k)}$ and $D_{12}^{(N,k)}$ are entire functions
on $\C^{2k}$. Moreover,
\bea\label{rltn}
D_{12}^{(N,k)}(\lv^{(k)})=- \frac{e^{-|\lambda_1-\lambda_2|^2}}
{1-|\lambda_1-\lambda_2|^2}\hat{T}D_{11}^{(N,k)}(\lv^{(k)}).
\eea
\end{lemma}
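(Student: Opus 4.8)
The plan is to reduce the statement to the Chalker--Mehlig formulas (\ref{cmgenius11})--(\ref{cmgenius12}) by integrating out the extra eigenvalues, to verify the identity (\ref{rltn}) first at $k=N$ by a direct algebraic manipulation, and then to propagate it down to all $2\le k\le N$. The first ingredient is the marginalisation identity
\[
D_{11}^{(N,k)}(\lv^{(k)})=\frac{1}{(N-k)!}\int_{\C^{N-k}}D_{11}^{(N,N)}(\lv^{(N)})\prod_{m=k+1}^{N}d\lambda_m\,d\bar\lambda_m ,
\]
and the analogous relation for $D_{12}$; both follow from the definitions (\ref{eq_d11})--(\ref{eq_d12}) read as weighted factorial intensities, together with $\rho^{(N,k)}(\lv^{(k)})=\frac{N!}{(N-k)!}\int_{\C^{N-k}}p_N(\lv^{(N)})\prod_{m>k}d\lambda_m\,d\bar\lambda_m$. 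In these integrals $\bar\lambda_{k+1},\dots,\bar\lambda_N$ are genuine complex conjugates of $\lambda_{k+1},\dots,\lambda_N$, whereas $\lambda_1,\bar\lambda_1,\dots,\lambda_k,\bar\lambda_k$ are kept as free variables.

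For the entireness claim, I would first clear the apparent poles in (\ref{cmgenius11})--(\ref{cmgenius12}) against the factor $|\Delta^{(N)}(\lv^{(N)})|^2$ inside $p_N$: the products $\prod_{k\ge2}(1+|\lambda_1-\lambda_k|^{-2})$ and $\prod_{k\ge3}(1+((\lambda_1-\lambda_k)(\bar\lambda_2-\bar\lambda_k))^{-1})$, as well as the prefactor $|\lambda_1-\lambda_2|^{-2}$ in (\ref{cmgenius12}), combine with the relevant factors of $|\Delta^{(N)}|^2$ into polynomials, so that $D_{11}^{(N,N)}$ and $D_{12}^{(N,N)}$ become (polynomial in the $2N$ variables $\lambda_j,\bar\lambda_j$) $\times\,e^{-\sum_j\lambda_j\bar\lambda_j}$, hence entire on $\C^{2N}$. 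For fixed $(\lambda_{k+1},\dots,\lambda_N)$ the integrand above is then holomorphic in $(\lambda_1,\bar\lambda_1,\dots,\lambda_k,\bar\lambda_k)$ and, on compacta in those variables, bounded by a polynomial times $e^{-\sum_{m>k}|\lambda_m|^2}$; so the integral converges locally uniformly and defines an entire function of $(\lambda_1,\bar\lambda_1,\dots,\lambda_k,\bar\lambda_k)$, by differentiation under the integral sign (or Morera plus Fubini). This proves entireness of both $D_{11}^{(N,k)}$ and $D_{12}^{(N,k)}$.

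Next I would establish (\ref{rltn}) at $k=N$. In the cleared form of $D_{11}^{(N,N)}$, split the Vandermonde into the factor involving both indices $1$ and $2$, the factors involving exactly one of them, and the rest, and apply $\hat T$, i.e. swap $\bar\lambda_1\leftrightarrow\bar\lambda_2$. Three elementary facts finish it: the Gaussian gains a factor $e^{(\lambda_1-\lambda_2)(\bar\lambda_1-\bar\lambda_2)}$ because $\lambda_1\bar\lambda_2+\lambda_2\bar\lambda_1=\lambda_1\bar\lambda_1+\lambda_2\bar\lambda_2-(\lambda_1-\lambda_2)(\bar\lambda_1-\bar\lambda_2)$; the factor $1+(\lambda_1-\lambda_2)(\bar\lambda_1-\bar\lambda_2)$ becomes $1-(\lambda_1-\lambda_2)(\bar\lambda_1-\bar\lambda_2)$; and the leftover products become exactly those occurring in the cleared form of $D_{12}^{(N,N)}$. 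Comparing gives $\hat T D_{11}^{(N,N)}=-e^{(\lambda_1-\lambda_2)(\bar\lambda_1-\bar\lambda_2)}\bigl(1-(\lambda_1-\lambda_2)(\bar\lambda_1-\bar\lambda_2)\bigr)D_{12}^{(N,N)}$, which on the real surface $\{\bar\lambda_i=\overline{\lambda_i}\}$ is (\ref{rltn}) for $k=N$. Note that $1-(\lambda_1-\lambda_2)(\bar\lambda_1-\bar\lambda_2)$ explicitly divides $\hat T D_{11}^{(N,N)}$, so the apparent pole of the right-hand side of (\ref{rltn}) at $|\lambda_1-\lambda_2|^2=1$ is removable — consistently with the entireness just established — and this persists after marginalisation.

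Finally, to pass from $k=N$ to general $k$: $\hat T$ acts only on $\bar\lambda_1,\bar\lambda_2$, so it commutes with the integration over $\lambda_{k+1},\dots,\lambda_N$, and $e^{-|\lambda_1-\lambda_2|^2}/(1-|\lambda_1-\lambda_2|^2)$ does not depend on the integration variables, so it factors out of the integral; applying the marginalisation identity to the $k=N$ case of (\ref{rltn}) then yields (\ref{rltn}) for every $2\le k\le N$. I expect the only genuinely delicate step to be the bookkeeping inside the algebraic identity at $k=N$: one has to keep track of precisely which apparent singularities of (\ref{cmgenius11})--(\ref{cmgenius12}) cancel against which Vandermonde factors, and then check that after $\bar\lambda_1\leftrightarrow\bar\lambda_2$ the residual product in $\hat T D_{11}^{(N,N)}$ agrees factor-by-factor with that in $D_{12}^{(N,N)}$ — routine but easy to get wrong. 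Everything else is standard.
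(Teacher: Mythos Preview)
Your proposal is correct and follows essentially the same idea as the paper: the relation is an identity at the level of the integrands once the Chalker--Mehlig factors are absorbed into the Vandermonde and the weight $\omega$, and $\hat T$ touches none of the integration variables. The only organisational difference is that the paper writes down the cleaned-up integral representations (\ref{eqnd11int})--(\ref{eqnd12int}) directly for arbitrary $k$ and reads off the relation by comparison, rather than first establishing it at $k=N$ and then marginalising.
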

To state the main result of the paper we need to introduce some notations.
Let 
\begin{eqnarray}\label{expols}
e_{p}(x)=\sum_{k=0}^{p} \frac{x^k}{k!}, ~p=0,1,2,\ldots
\end{eqnarray}
be the exponential polynomial of order $p$ considered as functions on $\C$. Let
\bea\label{fpols}
f_{p}(x)=(p+1)e_{p}(x)-x e_{p-1}(x),~p=0,1,\ldots ,
\eea
where we define $e_{-1}(x)\equiv0$.
The polynomials $f_p$ are closely related to the bi-orthogonal polynomials 
in the complex plane associated with conditional overlaps, see Section \ref{sec_finiteN}
for details. 
Finally, let $\frak{F}_n: \C^3\rightarrow \C$ be the following polynomial in three variables:
\bea\label{total}
\frak{F}_n(x,y,z)&=&e_n(xy)\cdot e_n(xz)-e_n(xyz)\cdot e_n(x)\cdot \left(1-x(1-y)(1-z)\right)\\
&&+\frac{(1-y)(1-z)}{n!}\cdot \frac{(xyz)^{n+1}e_{n}(x)-x^{n+1}e_{n}(xyz)}{1-yz},~n=0,1,\ldots
\nonumber
\eea
The following is the main result of the paper:
\begin{thm} (Determinantal structure of conditional overlaps)\label{thm_fn}
For any $1\leq k\leq N<\infty$,
\bea\label{d11exact}
D_{11}^{(N,k)}(\lv^{(k)})=\frac{f_{N-1}(|\lambda_1|^2)}{\pi }
e^{-|\lambda_1|^2}
\det_{2\leq i,j\leq k}\left(K^{(N-1)}_{11}\left(\lambda_i,\bar{\lambda}_i,\lambda_j,\bar{\lambda}_j \mid
\lambda_1,\bar{\lambda}_1\right)\right)\label{ovlpn11},
\eea
where the kernel
\bea\label{thm1_kern}
K^{(N)}_{11}(x,\bar{x},y,\bar{y}\mid \lambda,\bar{\lambda})=\omega(x,\bar{x}\mid \lambda, \bar{\lambda})\kappa^{(N)}
(\xb,y\mid \lambda, \bar{\lambda}),
\eea
is a function on $\C^6$, which is built out of the weight
\bea
\omega(x,y\mid \lambda,\mu)=\frac{1}{\pi}(1+(x-\lambda)(y-\mu))e^{-xy},
\eea
a function on $\C^4$, and the reduced kernel
\bea\label{thm1_redkern}
\kappa^{(N)}
(\xb,y\mid \lambda, \bar{\lambda})=
\frac{\left(\left(N+1\right) \frak{F}_{N+1}\left(\lambda \bar{\lambda},\frac{\bar{x}}{\overline{\lambda}},\frac{y}{\lambda}\right)
-\lambda \bar{\lambda}\frak{F}_{N}\left(\lambda \bar{\lambda},\frac{\bar{x}}{\overline{\lambda}},\frac{y}{\lambda}\right) \right)}
{\left(\bar{x}-\overline{\lambda}\right)^2\left(y-\lambda\right)^2 f_{N}\left(\lambda \overline{\lambda}\right)}.
\eea
Furthermore, for $k\geq 2$,
\bea\label{thm_d12exact}
D_{12}^{(N,k)}(\lv^{(k)})&=&-\frac{e^{-|\lambda_1|^2-|\lambda_2|^2}}{\pi^2}f_{N-1}(\lambda_1\bar{\lambda}_2)
\kappa^{(N-1)}(\bar{\lambda}_1,\lambda_2\mid \lambda_1,\bar{\lambda}_2)
\nonumber\\
&&\times
\det_{3\leq i,j\leq k}\left(K^{(N-1)}_{12}\left(\lambda_i,\bar{\lambda}_i,
\lambda_j,\bar{\lambda}_j\mid
\lambda_1,\lambdab_1,\lambda_2,\bar{\lambda}_2\right)\right)\label{ovlpn12},
\eea
where
\bea\label{kern12}
K^{(N)}_{12}(x,\bar{x},y,\bar{y}\mid u,\bar{u},v,\bar{v})&=&
\frac{\omega(x,\bar{x}\mid u,\bar{v} )}{\kappa^{(N)}(\bar{u},v
\mid u,\bar{v} )}
\\
&&\times\det
\left(\begin{array}{cc}
\kappa^{(N)}(\bar{u},v\mid u,\bar{v} ) & \kappa^{(N)}(\bar{u},y\mid u,\bar{v} )\\
\kappa^{(N)}(\bar{x},v\mid u,\bar{v} ) & \kappa^{(N)}(\bar{x},y\mid u,\bar{v} )
\end{array}
\right).
\nonumber
\eea
\end{thm}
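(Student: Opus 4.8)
\emph{Overall strategy.} The theorem has two halves — the diagonal formula (\ref{d11exact})--(\ref{thm1_redkern}) and the off-diagonal formula (\ref{thm_d12exact})--(\ref{kern12}) — and the plan is to prove the first directly and then deduce the second from it via Lemma~\ref{thm_rel}, which is exactly what lets one sidestep a five-diagonal moment matrix. For the diagonal case I would (a) use the Chalker--Mehlig identity (\ref{cmgenius11}) to rewrite $D_{11}^{(N,k)}$ as a correlation function of an auxiliary planar orthogonal-polynomial ensemble, (b) invoke Dyson's theorem to get the determinantal form, and (c) compute the resulting kernel and normalisation for finite $N$ by inverting the (tridiagonal) moment matrix of that ensemble. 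Concretely, splitting the Vandermonde in (\ref{cginibrelaw}) as $|\Delta^{(N)}(\lv^{(N)})|^{2}=\prod_{l=2}^{N}|\lambda_1-\lambda_l|^{2}\cdot|\Delta^{(N-1)}(\lambda_2,\dots,\lambda_N)|^{2}$ cancels the factor $\prod_{l=2}^{N}(1+|\lambda_1-\lambda_l|^{-2})$ in (\ref{cmgenius11}) against it and gives
\[
D_{11}^{(N,N)}(\lv^{(N)})=\tfrac{N!}{Z_N}\,e^{-|\lambda_1|^{2}}\prod_{l=2}^{N}\omega_{\lambda_1}(\lambda_l)\,\bigl|\Delta^{(N-1)}(\lambda_2,\dots,\lambda_N)\bigr|^{2},\qquad \omega_{\lambda_1}(\lambda):=(1+|\lambda-\lambda_1|^{2})e^{-|\lambda|^{2}}.
\]
Integrating out $\lambda_{k+1},\dots,\lambda_N$ as in (\ref{cgin1}) then identifies $D_{11}^{(N,k)}$, up to the factor $\tfrac{N!}{Z_N}e^{-|\lambda_1|^{2}}$ and the partition function of the auxiliary ensemble, with the $(k-1)$-point correlation function of the $(N-1)$-particle planar ensemble carrying the weight $\omega_{\lambda_1}=\pi\,\omega(\cdot,\cdot\mid\lambda_1,\bar\lambda_1)$.

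\emph{Determinantal form and the normalisation.} By Dyson's theorem (see \cite{mehta}) this correlation function equals $\det_{2\le i,j\le k}K^{(N-1)}_{11}(\lambda_i,\bar\lambda_i,\lambda_j,\bar\lambda_j\mid\lambda_1,\bar\lambda_1)$, with $K^{(N-1)}_{11}=\omega(\cdot\mid\lambda_1,\bar\lambda_1)\,\kappa^{(N-1)}$, where $\kappa^{(N-1)}(\bar x,y\mid\lambda_1,\bar\lambda_1)$ is the bilinear form in the monomials $\bar x^{a},y^{b}$, $0\le a,b\le N-2$, built from the inverse of the moment matrix $\mathcal M_{ab}=\int_{\C}\lambda^{a}\bar\lambda^{b}\,\omega_{\lambda_1}(\lambda)\,d^{2}\lambda$. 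The decisive structural point is that, $\omega_{\lambda_1}$ differing from the Gaussian weight only by the quadratic factor $(\lambda-\lambda_1)(\bar\lambda-\bar\lambda_1)$, the matrix $\mathcal M$ is \emph{tridiagonal}: it carries $\pi(a!+(a+1)!+|\lambda_1|^{2}a!)$ on the main diagonal and $-\pi\lambda_1,-\pi\bar\lambda_1$ (times factorials) on the two adjacent ones. Expanding $\det\mathcal M$ along the last row yields a three-term recursion in the matrix size whose solution is exactly $\det\mathcal M=\pi^{N-1}\bigl(\prod_{j=1}^{N-1}j!\bigr)f_{N-1}(|\lambda_1|^{2})$, with $f_{N-1}$ the polynomial (\ref{fpols}); combining this with $\tfrac{N!}{Z_N}$ produces the prefactor $\tfrac1\pi f_{N-1}(|\lambda_1|^{2})\,e^{-|\lambda_1|^{2}}$ of (\ref{d11exact}). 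The recursion here is the one already used in \cite{chalker1,chalker2} and \cite{walters}.

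\emph{The kernel — the main obstacle.} Evaluating $\kappa^{(N-1)}$ in closed form is the computational core and the step I expect to be hardest. Starting from the classical continuant formula for the inverse of a tridiagonal matrix, $(\mathcal M^{-1})_{ab}$ is a product of a ``left'' and a ``right'' minor sequence — both governed by the recursion above — divided by $\det\mathcal M$; inserting this into $\sum_{a,b}\bar x^{a}(\mathcal M^{-1})_{ab}y^{b}$, splitting the double sum at $a=b$ and resumming the resulting geometric-type series produces the exponential polynomials (\ref{expols}), which reassemble into the three-variable polynomial $\frak F_n$ of (\ref{total}) and give (\ref{thm1_redkern}). Along the way I would check the facts that make (\ref{thm1_kern})--(\ref{thm1_redkern}) well posed: that $1-yz$ divides the bracket in (\ref{total}); that $\frak F_n(x,1,z)=\frak F_n(x,y,1)=0$ together with the vanishing of the first $\bar x$- and $y$-derivatives at those loci, so that the double poles $(\bar x-\bar\lambda)^{2}(y-\lambda)^{2}$ in (\ref{thm1_redkern}) are removable; and that the apparent poles at the zeros of $f_N(\lambda\bar\lambda)$ disappear — the last being automatic once one notes that the integral representation above is manifestly an entire function of the $2k$ independent variables $(\lambda_i,\bar\lambda_i)$, which simultaneously establishes the entireness asserted in Lemma~\ref{thm_rel}. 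A useful sanity check at this stage is the $k=1$ case $D_{11}^{(N,1)}(\lambda_1)=\tfrac1\pi f_{N-1}(|\lambda_1|^{2})e^{-|\lambda_1|^{2}}$, which reproduces the known Chalker--Mehlig one-point overlap density.

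\emph{The off-diagonal overlap.} Finally I would substitute (\ref{d11exact}) into Lemma~\ref{thm_rel}, $D_{12}^{(N,k)}=-\tfrac{e^{-|\lambda_1-\lambda_2|^{2}}}{1-|\lambda_1-\lambda_2|^{2}}\,\hat T D_{11}^{(N,k)}$. The transposition $\hat T$ sends $\bar\lambda_1\mapsto\bar\lambda_2$ throughout (\ref{d11exact}) and, in the $i=2$ row of the determinant only, also $\bar\lambda_2\mapsto\bar\lambda_1$; that row then carries the common factor $\omega(\lambda_2,\bar\lambda_1\mid\lambda_1,\bar\lambda_2)=\tfrac1\pi(1-|\lambda_1-\lambda_2|^{2})e^{-\bar\lambda_1\lambda_2}$, whose $(1-|\lambda_1-\lambda_2|^{2})$ cancels the singular prefactor and whose exponential combines with $e^{-|\lambda_1-\lambda_2|^{2}}e^{-\lambda_1\bar\lambda_2}$ to leave $e^{-|\lambda_1|^{2}-|\lambda_2|^{2}}$; the denominator factor $f_{N-1}(\lambda_1\bar\lambda_2)$ inside $\kappa^{(N-1)}(\bar\lambda_1,\lambda_2\mid\lambda_1,\bar\lambda_2)$ then balances the numerator $f_{N-1}(\lambda_1\bar\lambda_2)$. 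Pulling that row factor out of the determinant and taking the Schur complement with respect to the $(2,2)$ entry $\kappa^{(N-1)}(\bar\lambda_1,\lambda_2\mid\lambda_1,\bar\lambda_2)$ leaves precisely the $(k-2)\times(k-2)$ determinant of the $2\times2$-determinantal kernel $K^{(N-1)}_{12}$ of (\ref{kern12}) — a routine Sylvester/Schur-complement manipulation of the type reviewed in \cite{tanner} — which is (\ref{thm_d12exact}).
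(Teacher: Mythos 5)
Your proposal follows essentially the same route as the paper: split the Vandermonde against the Chalker--Mehlig product to expose the planar weight $\omega$, apply Dyson's theorem to get the determinantal form, exploit the tridiagonality of the moment matrix (the paper phrases this as an LDU factorisation, you as the continuant formula -- equivalent for tridiagonal matrices) to obtain the normalisation $f_{N-1}(|\lambda_1|^2)$ and the reduced kernel, and finally push $D_{11}^{(N,k)}$ through Lemma~\ref{thm_rel} and a Schur-complement/Tanner-type determinant identity to obtain $D_{12}^{(N,k)}$. The one place your sketch undersells the difficulty is the closed-form evaluation of $\kappa^{(N)}$: the paper's path from the double sum to $\frak F_n$ hinges on the nontrivial telescoping identity of Lemma~\ref{thm_step1} for $\Phi_n(x)=\sum_{k\le n} x^k/[(k+1)!f_k f_{k+1}]$ followed by several ``lucky'' cancellations (indeed the paper remarks on this), and ``splitting at $a=b$ and resumming a geometric-type series'' is a fair description of the entry point but not of the actual mechanism that produces (\ref{thm1_redkern}); also note that the factor $f_N(\lambda\bar\lambda)$ in the denominator of the reduced kernel represents genuine poles of the inverse moment matrix, which only cancel in the full determinantal expression for $D_{11}^{(N,k)}$, not at the level of $\kappa^{(N)}$ itself.
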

\noindent
{\bf Remark.} Everywhere in the paper we use the convention that the determinant of an empty matrix is equal to $1$.\\
\\
The finite-$N$ answer stated above enables an easy study of the large-$N$ limits of conditional overlaps. It is well known that the global spectral density of complex eigenvalues approaches the circular law, 
$\lim_{N\to\infty} \rho^{(N,1)}(\sqrt{N}z)= \frac{1}{\pi}\Theta(1-|z|)$, 
where $\Theta$ is the Heaviside step function,
cf. \cite{KS}. Therefore, 
we
will consider two such local, microscopic limits: the local bulk scaling limit,
\bea
D_{11}^{(bulk,~k)}(\lv^{(k)})&=&\lim_{N\rightarrow \infty} \frac{1}{N} D_{11}^{(N,k)}(\lv^{(k)}),\\
D_{12}^{(bulk,~k)}(\lv^{(k)})&=&\lim_{N\rightarrow \infty} \frac{1}{N} D_{12}^{(N,k)}(\lv^{(k)}),
\eea
i.e. we fix $\lambda_{1},\ldots,\lambda_{k}$ and take the large-$N$ limit, which places us in the vicinity of the origin\footnote{To access the general bulk we would have to scale $z=r e^{i\theta}\sqrt{N}+\lambda$, with $r<1$ and fixed $\lambda$.},
and the local edge scaling limit,
\bea\label{def_d11edge}
D_{11}^{(edge,~k)}(\lv^{(k)})&=&\lim_{N\rightarrow \infty}  \frac{1}{\sqrt{N}}D_{11}^{(N,k)}(e^{i \theta}
(\sqrt{N}+\lv^{(k)})),\\
\label{def_d12edge}
D_{12}^{(edge,~k)}(\lv^{(k)})&=&\lim_{N\rightarrow \infty} \frac{1}{\sqrt{N}}D_{12}^{(N,k)}(e^{i \theta}(\sqrt{N}
+\lv^{(k)})),
\eea
i.e. we shift to the vicinity of the spectral edge at $|z|=\sqrt{N}$, fix 
$\lambda_{1},\ldots,\lambda_{k}$ and then take the $N\rightarrow \infty$ limit.
The  overall rescaling of conditional overlaps used for the bulk and edge limits by the factors of $N^{-1}$ and $N^{-1/2}$
correspondingly is justified in the introduction.
Notice also that our notations for the edge scaling limit reflect the independence
of the final answer on the point at the edge of the spectrum around which we expand. 

\begin{cor}\label{thm_bulk}
(Local bulk scaling limit of conditional overlaps)
\bea
D_{11}^{(bulk,~k)}(\lv^{(k)})=
\frac{1}{\pi} \det_{2\leq i,j\leq k}
\left(K^{(bulk)}_{11}(\lambda_i,
\bar{\lambda}_i,
\lambda_j,\bar{\lambda}_j\mid \lambda_1,\bar{\lambda}_1)
\right)\label{ovlp11},
\eea
where $K^{(bulk)}_{11}:\C^6\rightarrow \C$ is the limiting kernel:
\bea\label{thm_k11bulk}
K^{(bulk)}_{11}(u,\bar{u},v,\bar{v}\mid \lambda,\bar{\lambda})=\omega^{(bulk)}(u,\bar{u}\mid \lambda, \bar{\lambda})
\kappa^{(bulk)}(\bar{u},v\mid \lambda, \bar{\lambda}),
\eea
where
\bea\label{thm_weight11bulk}
\omega^{(bulk)}(u,\bar{u}\mid \lambda, \bar{\lambda})=\frac{1}{\pi} (1+(u-\lambda)(\bar{u}-\bar{\lambda}))
e^{-(u-\lambda)(\bar{u}-\bar{\lambda})}
\eea
is the weight
and
\bea\label{thm_redker11bulk}
\kappa^{(bulk)}(\bar{u},v\mid \lambda, \bar{\lambda})=
\left.
\frac{d}{dz}\left(\frac{e^z-1}{z}\right)
\right|_{z=(\bar{u}-\bar{\lambda})(v-\lambda)},
\eea
is the reduced kernel. Moreover,
\bea\label{ovlp12}
D_{12}^{(bulk,~k)}(\lv^{(k)})=-\frac{1}{\pi^2} 
\kappa^{(bulk)}(\bar{\lambda}_1, \lambda_2\mid \lambda_1, \bar{\lambda}_2)
\det_{3\leq i,j\leq k}
\left(K^{(bulk)}_{12}(\lambda_i,
\bar{\lambda}_i,
\lambda_j,\bar{\lambda}_j\mid \lambda_1,\lambdab_1,\lambda_2,\bar{\lambda}_2)\right),\ \ 
\eea
where
\bea\label{eq_ker12bulk}
K^{(bulk)}_{12}(\lambda_i,
\bar{\lambda}_i,
\lambda_j,\bar{\lambda}_j\mid \lambda_1,\lambdab_1,\lambda_2,\bar{\lambda}_2)&=&
\frac{\omega^{(bulk)}(\lambda_i, \bar{\lambda}_i\mid \lambda_1, \bar{\lambda}_2)}
{\kappa^{(bulk)}(\bar{\lambda}_1, \lambda_2\mid \lambda_1, \bar{\lambda}_2)}\\
&\times&\det
\left(\begin{array}{cc}
\kappa^{(bulk)}(\bar{\lambda}_1, \lambda_2\mid \lambda_1, \bar{\lambda}_2)&
\kappa^{(bulk)}(\bar{\lambda}_1, \lambda_j\mid \lambda_1, \bar{\lambda}_2)\\
\kappa^{(bulk)}(\bar{\lambda}_i, \lambda_2\mid \lambda_1, \bar{\lambda}_2)&
\kappa^{(bulk)}(\bar{\lambda}_i, \lambda_j\mid \lambda_1, \bar{\lambda}_2)
\end{array}\right).\nonumber
\eea
\end{cor}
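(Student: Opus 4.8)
\noindent\emph{Proof strategy.} The plan is to obtain the bulk limit directly from the finite‑$N$ formulas of Theorem~\ref{thm_fn} by substituting the elementary large‑order asymptotics of the polynomials $e_p$, $f_p$ and $\frak{F}_n$, and then to identify the limiting expressions with the kernels in the statement after absorbing a few harmless conjugation (``gauge'') factors into the determinants.

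First I would record the two asymptotic inputs, valid for fixed arguments as the order tends to infinity: from $e_p(w)\to e^w$ and $w^{p+1}/p!\to 0$ one gets $f_p(x)=(p+1)e_p(x)-x\,e_{p-1}(x)=(p+1-x)e^x+o(1)$, hence $f_p(x)/(p+1)\to e^x$; and since the $1/n!$‑term in (\ref{total}) is negligible, $\frak{F}_n(x,y,z)\to\frak{F}_\infty(x,y,z):=e^{x(y+z)}-e^{x(1+yz)}\bigl(1-x(1-y)(1-z)\bigr)$. Substituting these into (\ref{thm1_redkern}) with $N$ replaced by $N-1$, and using that the numerator $N\,\frak{F}_{N}-\lambda\bar\lambda\,\frak{F}_{N-1}$ behaves like $N\,\frak{F}_\infty$ while $f_{N-1}(\lambda\bar\lambda)\sim Ne^{\lambda\bar\lambda}$, I obtain
\[
\kappa^{(N-1)}(\bar x,y\mid\lambda,\bar\lambda)\;\longrightarrow\;\frac{\frak{F}_\infty\bigl(\lambda\bar\lambda,\bar x/\bar\lambda,y/\lambda\bigr)}{(\bar x-\bar\lambda)^2(y-\lambda)^2\,e^{\lambda\bar\lambda}}.
\]
The decisive algebraic step is to simplify the right‑hand side: writing $\xi=\bar x-\bar\lambda$, $\eta=y-\lambda$ and expanding, one checks $\frak{F}_\infty\bigl(\lambda\bar\lambda,1+\xi/\bar\lambda,1+\eta/\lambda\bigr)=e^{2\lambda\bar\lambda+\xi\lambda+\eta\bar\lambda}\bigl(1-e^{\xi\eta}(1-\xi\eta)\bigr)$; together with $\frac{d}{dz}\frac{e^z-1}{z}=\frac{1-e^z(1-z)}{z^2}$ this shows the limit equals $e^{\bar x\lambda+y\bar\lambda-\lambda\bar\lambda}\,\kappa^{(bulk)}(\bar x,y\mid\lambda,\bar\lambda)$ with $\kappa^{(bulk)}$ as in (\ref{thm_redker11bulk}).

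The rest is bookkeeping. For $D_{11}$ the scalar prefactor in (\ref{d11exact}) satisfies $\frac{1}{\pi N}f_{N-1}(|\lambda_1|^2)e^{-|\lambda_1|^2}\to\frac1\pi$, while in $\det_{2\le i,j\le k}\bigl(\omega(\lambda_i,\bar\lambda_i\mid\lambda_1,\bar\lambda_1)\kappa^{(N-1)}(\bar\lambda_i,\lambda_j\mid\lambda_1,\bar\lambda_1)\bigr)$ the emerging factor $e^{\bar\lambda_i\lambda_1+\lambda_j\bar\lambda_1-\lambda_1\bar\lambda_1}$ is a product of a row factor and a column factor, whose extraction multiplies the determinant by $\prod_i e^{\bar\lambda_i\lambda_1+\lambda_i\bar\lambda_1-\lambda_1\bar\lambda_1}=\prod_i\omega^{(bulk)}(\lambda_i,\bar\lambda_i\mid\lambda_1,\bar\lambda_1)/\omega(\lambda_i,\bar\lambda_i\mid\lambda_1,\bar\lambda_1)$, turning each $\omega$ into the matching $\omega^{(bulk)}$ and the determinant into $\det_{2\le i,j\le k}K^{(bulk)}_{11}$; since a $(k-1)\times(k-1)$ determinant is continuous in its entries, (\ref{ovlp11}) follows. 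For $D_{12}$ I would use $f_{N-1}(\lambda_1\bar\lambda_2)/N\to e^{\lambda_1\bar\lambda_2}$ and $\kappa^{(N-1)}(\bar\lambda_1,\lambda_2\mid\lambda_1,\bar\lambda_2)\to e^{|\lambda_1|^2+|\lambda_2|^2-\lambda_1\bar\lambda_2}\kappa^{(bulk)}(\bar\lambda_1,\lambda_2\mid\lambda_1,\bar\lambda_2)$, so that the prefactor in (\ref{thm_d12exact}) multiplied by $-e^{-|\lambda_1|^2-|\lambda_2|^2}/\pi^2$ tends to $-\kappa^{(bulk)}(\bar\lambda_1,\lambda_2\mid\lambda_1,\bar\lambda_2)/\pi^2$; expanding the $2\times2$ determinant in (\ref{kern12}) and inserting the reduced‑kernel limit above gives $K^{(N-1)}_{12}(\lambda_i,\bar\lambda_i,\lambda_j,\bar\lambda_j\mid\lambda_1,\lambda_1,\lambda_2,\bar\lambda_2)\to e^{(\lambda_j-\lambda_i)\bar\lambda_2}K^{(bulk)}_{12}(\lambda_i,\bar\lambda_i,\lambda_j,\bar\lambda_j\mid\lambda_1,\lambda_1,\lambda_2,\bar\lambda_2)$, where once more $e^{(\lambda_j-\lambda_i)\bar\lambda_2}$ splits into a row and a column factor whose products over $3\le i,j\le k$ cancel, whence $\det_{3\le i,j\le k}K^{(N-1)}_{12}\to\det_{3\le i,j\le k}K^{(bulk)}_{12}$ and (\ref{ovlp12}) follows. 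Throughout one may assume $\lambda_1,\dots,\lambda_k$ pairwise distinct and nonzero, so that the apparent denominators in (\ref{thm1_redkern}) cause no difficulty, the general case following by continuity from the entirety statement of Lemma~\ref{thm_rel}.

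The only genuinely non‑mechanical ingredient is the algebraic identity $\frak{F}_\infty\bigl(\lambda\bar\lambda,1+\xi/\bar\lambda,1+\eta/\lambda\bigr)=e^{2\lambda\bar\lambda+\xi\lambda+\eta\bar\lambda}\bigl(1-e^{\xi\eta}(1-\xi\eta)\bigr)$, together with the realization that it is the determinants — not the individual kernels $K^{(N)}_{11}$, $K^{(N)}_{12}$ — that converge to the stated bulk forms. The hard (if modest) part will be verifying this identity and tracking the gauge factors so that the weight/reduced‑kernel split comes out exactly as displayed in the corollary; everything else is continuity of finite determinants together with the $o(1)$ estimates above.
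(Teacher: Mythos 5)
Your proposal is correct and follows essentially the same route as the paper's own proof of Corollary~\ref{thm_bulk}: both substitute the elementary limits $e_p\to\exp$, $f_p(x)/p\to e^x$ and $\frak{F}_n\to\frak{F}_\infty$ into the finite-$N$ kernel, both identify the same algebraic identity relating $\frak{F}_\infty(\lambda\bar\lambda,\bar x/\bar\lambda,y/\lambda)$ to $\frac{d}{dz}\left(\frac{e^z-1}{z}\right)\big|_{z=(\bar x-\bar\lambda)(y-\lambda)}$ after shifting to $\xi=\bar x-\bar\lambda,\eta=y-\lambda$, and both observe that the residual exponential factor is a row/column conjugation that drops out of the determinant (the paper strips $e^{(y-x)\bar\lambda}$ while you extract the full factor and reabsorb part of it into turning $\omega$ into $\omega^{(bulk)}$ — these are equivalent). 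The only stylistic addition on your side is the explicit remark that one may first take $\lambda_1,\dots,\lambda_k$ distinct and nonzero and then extend by continuity via the entirety assertion of Lemma~\ref{thm_rel}, which the paper leaves implicit.
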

As expected, conditional overlaps in the bulk are translationally invariant, meaning that $D^{(bulk,~k)}_{11}$
and $D^{(bulk,~k)}_{12}$ are invariant with respect to 
a simultaneous shift of the arguments, 
\[
\lambda_i\rightarrow \lambda_i+\mu, \bar{\lambda}_i\rightarrow \bar{\lambda}_i+\bar{\mu},~1\leq i\leq k,~\mu \in \C.
\]
Less trivially, the overlaps in the bulk are invariant with respect to the above transformation for arbitrary complex numbers
$\mu$ and $\bar{\mu}$, which are not necessarily conjugate to each other. This extended translational invariance
is responsible for the success of the short heuristic derivation of Corollary \ref{thm_bulk} given in Section  \ref{sec_heur}.

Notice that the reduced kernel (\ref{thm_redker11bulk}) on the real line coincides with the density
of {\em eigenvalues} for a truncated unitary ensemble in the regime
of weak non-unitarity found by Sommers and Zyczkowski, see Eqn. (21) of \cite{sommers} at $L=1$. At the moment we do not understand any deep reason for such
a coincidence. 

Finally, let us verify that the statement of Corollary \ref{thm_bulk} agrees with Chalker and Mehlig's answer for
$D_{12}^{(bulk,2)}(\lambda_1,\lambda_2)$ obtained in \cite{chalker1,
chalker2}.
Specialising (\ref{ovlp12}) to the particular case
$k=2$ and denoting
\bea 
\lambda_{ij}=\lambda_i-\lambda_j,  ~\lambdab_{ij}=\lambdab_i-\lambdab_j,~1 \leq i,j\leq N,
\eea
we find that
\begin{eqnarray*}
D_{12}^{(bulk,~2)}(\lambda_1,\lambda_2)&=&
-\frac{1}{\pi^2} 
\kappa^{(bulk)}(\bar{\lambda}_1, \lambda_2\mid \lambda_1, \bar{\lambda}_2)
\\
&=&-\frac{1}{\pi^2}\frac{d}{dz}\left(\frac{e^z-1}{z}\right)\mid_{z=-|\lambda_{12}|^2}
=-\frac{1}{\pi^2}\frac{1}
{|\lambda_{12}|^4}\left(1-\left(1+|\lambda_{12}|^2\right)e^{-|\lambda_{12}|^2}\right),
\end{eqnarray*}
which corresponds to Eqn. (9) of \cite{chalker1} 
for fluctuations at the origin ($z_+=0$ in \cite{chalker1}).
We conjecture the corresponding local bulk kernels \eqref{thm_k11bulk} and \eqref{eq_ker12bulk} to be universal, see also \cite{nowak}.
\\

The finite-$N$ results stated in Theorem \ref{thm_fn} are also well suited for studying the statistics of overlaps
at the edge. For $a \in \C$, let
\bea\label{eq_erfc}
F(a)=\frac{1}{\sqrt{2\pi}}\int_a^\infty e^{-x^2/2}dx\equiv \frac{1}{2}\mbox{erfc}\left(\frac{a}{\sqrt{2}}\right),
\eea
where $\mbox{erfc}$ is the complementary error 
function, analytically continued to the complex plane. 
For any $a,b,c,d,f \in \C$, let
\bea\label{hedge}
&&
H(a,b,c,d,f)=-\frac{\sqrt{2\pi}}{\left(1-\sqrt{2\pi}ae^{\frac{a^2}{2}}F(a)\right)}\\\nonumber 
&&
\quad\quad\quad\quad\times
\left.
\frac{d}{dx}\left[e^{\frac{(a+x)^2}{2}}\left(e^{-f}F(b+x)F(c+x)-F(d+x)F(a+x)+fF(d)F(a+x)\right)\right]\right|_{x=0}.
\eea
\begin{cor}
(Local edge scaling limit of conditional overlaps)\label{thm_edge}
\bea\label{d11_edge}
D_{11}^{(edge,~k)}(\lv^{(k)})&=&
\frac{1}{\sqrt{2\pi^3}}\left(e^{-\frac{1}{2}(\lambda_1+\bar{\lambda}_1)^2}
-\sqrt{2\pi}(\lambda_1+\bar{\lambda}_1)F(\lambda_1+\bar{\lambda}_1)
\right) \nonumber \\ 
&&\times \det_{2\leq i,j\leq k}
\left(K^{(edge)}_{11}(\lambda_i,
\bar{\lambda}_i,
\lambda_j,\bar{\lambda}_j\mid \lambda_1,\bar{\lambda}_1)
\right),
\eea
where $K^{(edge)}_{11}:\C^6\rightarrow \C$ is the limiting kernel:
\bea\label{limkeredge}
K^{(edge)}_{11}(x,\bar{x},y,\bar{y}\mid \lambda,\bar{\lambda})=\omega^{(edge)}(x,\bar{x}\mid \lambda, \bar{\lambda})
\kappa^{(edge)}(\bar{x},y\mid \lambda, \bar{\lambda}),
\eea
where
\bea\label{limweightedge}
\omega^{(edge)}(x,\bar{x}\mid \lambda, \bar{\lambda})=\frac{1}{\pi} (1+(x-\lambda)(\bar{x}-\bar{\lambda}))
e^{-x\bar{x}}
\eea
is the weight
and
\bea\label{limredkeredge}
\kappa^{(edge)}(\bar{x},y\mid \lambda, \bar{\lambda})=e^{\bar{x}y}\frac{
H\left(\lambda+\bar{\lambda},\lambda+\bar{x},y+\bar{\lambda},y+\bar{x}, (\lambda-y)(\bar{\lambda}-\bar{x})\right)}
{(\lambda-y)^2(\bar{\lambda}-\bar{x})^2}
\eea
is the reduced kernel. Moreover, 
\bea\label{d12_edge}
D_{12}^{(edge,~k)}(\lv^{(k)})&=&
-\frac{1}{\sqrt{2\pi^5}}\left(1
-\sqrt{2\pi}(\lambda_1+\bar{\lambda}_2)e^{\frac{1}{2}(\lambda_1+\bar{\lambda}_2)^2}F(\lambda_1+\bar{\lambda}_2)
\right) \nonumber \\ 
&&\times \frac{e^{-|\lambda_1-\lambda_2|^2-\frac{1}{2}(\lambda_1+\bar{\lambda}_2)^2}}{\lambda_{12}^2\bar{\lambda}_{12}^2}
H(\lambda_1+\bar{\lambda}_2,\lambda_1+\bar{\lambda}_1,\lambda_2+\bar{\lambda}_2,
\lambda_2+\bar{\lambda}_1,-\lambda_{12}\bar{\lambda}_{12})
\nonumber\\
&&\times \det_{3\leq i,j\leq k}
\left(K^{(edge)}_{12}(\lambda_i,
\bar{\lambda}_i,
\lambda_j,\bar{\lambda}_j\mid \lambda_1,\lambdab_1,\lambda_2,\bar{\lambda}_2)\right),
\eea
where
\bea\label{eq_ker12edge}
K^{(edge
)}_{12}(\lambda_i,
\bar{\lambda}_i,
\lambda_j,\bar{\lambda}_j\mid \lambda_1,\lambdab_1,\lambda_2,\bar{\lambda}_2)&=&
\frac{\omega^{(edge)}(\lambda_i, \bar{\lambda}_i\mid \lambda_1, \bar{\lambda}_2)}
{\kappa^{(edge)}(\bar{\lambda}_1, \lambda_2\mid \lambda_1, \bar{\lambda}_2)}\\
&\times&\det
\left(\begin{array}{cc}
\kappa^{(edge)}(\bar{\lambda}_1, \lambda_2\mid \lambda_1, \bar{\lambda}_2)&
\kappa^{(edge)}(\bar{\lambda}_1, \lambda_j\mid \lambda_1, \bar{\lambda}_2)\\
\kappa^{(edge)}(\bar{\lambda}_i, \lambda_2\mid \lambda_1, \bar{\lambda}_2)&
\kappa^{(edge)}(\bar{\lambda}_i, \lambda_j\mid \lambda_1, \bar{\lambda}_2)
\end{array}\right).\nonumber
\eea
\end{cor}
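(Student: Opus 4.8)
The plan is to obtain Corollary~\ref{thm_edge} directly from the finite-$N$ formulas \eqref{d11exact} and \eqref{thm_d12exact} of Theorem~\ref{thm_fn}, by substituting the edge scaling $\lambda_m\mapsto e^{i\theta}(\sqrt N+\lambda_m)$ (hence $\bar\lambda_m\mapsto e^{-i\theta}(\sqrt N+\bar\lambda_m)$, so that products such as $\lambda_i\bar\lambda_j$ become $(\sqrt N+\lambda_i)(\sqrt N+\bar\lambda_j)=N+\sqrt N(\lambda_i+\bar\lambda_j)+O(1)$ while differences $\lambda_i-\lambda_j$ only acquire a phase $e^{i\theta}$ that cancels in the end) and letting $N\to\infty$. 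Everything rests on one analytic input: the classical incomplete-Gamma asymptotics of the exponential polynomials \eqref{expols}, namely that uniformly for $a$ in compact subsets of $\C$ and $x=n+a\sqrt n+O(1)$ one has $e^{-x}e_n(x)\to F(a)$ together with the local-limit refinement $e^{-x}x^n/n!=(2\pi n)^{-1/2}e^{-a^2/2}(1+o(1))$. The other, purely structural, ingredient is the convergence of the $(k-1)\times(k-1)$ and $(k-2)\times(k-2)$ determinants in \eqref{d11exact} and \eqref{thm_d12exact}: under the scaling each entry of the kernel acquires a divergent factor of rank-one type, $e^{\sqrt N(\lambda_j-\lambda_i)}$, which drops out of the determinant by multilinearity because the product of the row factors $\prod_i e^{-\sqrt N\lambda_i}$ with the column factors $\prod_j e^{\sqrt N\lambda_j}$ is $1$; once it is removed the entry converges to the limiting kernel of \eqref{limkeredge} resp. \eqref{eq_ker12edge}, and since $k$ is fixed the determinant converges to that of the limit.

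First I would record the asymptotics of the polynomials $f_n$ of \eqref{fpols}. Writing $f_n(w)=(n+1-w)\,e_{n-1}(w)+(n+1)\,w^n/n!$ and applying the two displays above with $a=(w-n)/\sqrt n$ gives $f_n(w)\sim\sqrt{n/2\pi}\;e^{w}\bigl(e^{-a^2/2}-\sqrt{2\pi}\,a\,F(a)\bigr)=\sqrt{n/2\pi}\;e^{w-a^2/2}\bigl(1-\sqrt{2\pi}\,a\,e^{a^2/2}F(a)\bigr)$. For $D_{11}$ this is used at $w=|\lambda_1|^2=N+(\lambda_1+\bar\lambda_1)\sqrt N+O(1)$, i.e. $a=\lambda_1+\bar\lambda_1$; after dividing by the overall $\sqrt N$ of \eqref{def_d11edge}, the prefactor $\frac{1}{\pi}f_{N-1}(|\lambda_1|^2)\,e^{-|\lambda_1|^2}$ converges to the scalar factor in \eqref{d11_edge}, and the quantity $1-\sqrt{2\pi}\,a\,e^{a^2/2}F(a)$ is exactly the denominator in the definition \eqref{hedge} of $H$. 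The same computation at $w=\lambda_1\bar\lambda_2$ (with ``$a$'' replaced by $\lambda_1+\bar\lambda_2$) supplies the $1-\sqrt{2\pi}(\lambda_1+\bar\lambda_2)e^{\frac12(\lambda_1+\bar\lambda_2)^2}F(\lambda_1+\bar\lambda_2)$ factor in \eqref{d12_edge}; this is legitimate because, as explained in connection with Lemma~\ref{thm_rel}, the expressions \eqref{fpols}, \eqref{total} and \eqref{thm1_redkern} are entire in $\lambda_m,\bar\lambda_m$ treated as independent variables, so $\lambda$ and $\bar\lambda$ need not remain conjugate.

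Next I would analyse the reduced kernel \eqref{thm1_redkern}. Under the scaling its arguments are $X=\lambda\bar\lambda\sim N$ and $Y=\bar x/\bar\lambda,\,Z=y/\lambda\to1$, so the products entering $\frak{F}_n$ in \eqref{total} are $XY=\lambda\bar x$, $XZ=\bar\lambda y$, $XYZ=\bar x y$, $X=\lambda\bar\lambda$, each of the form $n+(\cdot)\sqrt n+O(1)$; the algebraic identity $XY+XZ=XYZ+X-X(1-Y)(1-Z)$, with $X(1-Y)(1-Z)=(\bar\lambda-\bar x)(\lambda-y)+O(N^{-1/2})=O(1)$, shows that all three terms of $\frak{F}_n$ share one common exponential factor, after stripping which each term contributes a finite combination of $F$-values (in the last term the $\sqrt n$ produced by Stirling's formula for $w^{n+1}/n!$ is cancelled by the $1/\sqrt n$ hidden in $(1-Y)(1-Z)/(1-YZ)$). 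The numerator of \eqref{thm1_redkern} is $(N+1-\lambda\bar\lambda)\,\frak{F}_N+(N+1)(\frak{F}_{N+1}-\frak{F}_N)$; since $N+1-\lambda\bar\lambda=-(\lambda+\bar\lambda)\sqrt N+O(1)$ the first piece is $-(\lambda+\bar\lambda)\sqrt N\,\frak{F}_N$, while the second, the discrete $n$-difference, turns into a derivative because a unit increment of $n$ acts to leading order as a shift of the argument in the replacements $e_n(\cdot)\to F(\cdot+x)$. Their sum is the $x$-derivative at $x=0$ of $e^{(a+x)^2/2}\bigl(e^{-f}F(b+x)F(c+x)-F(d+x)F(a+x)+fF(d)F(a+x)\bigr)$ with $(a,b,c,d,f)=\bigl(\lambda+\bar\lambda,\lambda+\bar x,y+\bar\lambda,y+\bar x,(\lambda-y)(\bar\lambda-\bar x)\bigr)$; dividing by $(\bar x-\bar\lambda)^2(y-\lambda)^2 f_N(\lambda\bar\lambda)$ cancels the $\sqrt N$'s and the bulk of the $e^{(\cdot)N+(\cdot)\sqrt N}$ factors, leaves the $1-\sqrt{2\pi}ae^{a^2/2}F(a)$ denominator and the $e^{\bar x y}$ factor, and reproduces $\kappa^{(edge)}$ of \eqref{limredkeredge}. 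Since $\omega\to\omega^{(edge)}$ of \eqref{limweightedge} is immediate and the residual exponential is the rank-one $e^{\sqrt N(\lambda_j-\lambda_i)}$ that disappears from the determinant, this gives \eqref{d11_edge}. For $D_{12}$ one argues identically from \eqref{thm_d12exact}: the factor $f_{N-1}(\lambda_1\bar\lambda_2)\,\kappa^{(N-1)}(\bar\lambda_1,\lambda_2\mid\lambda_1,\bar\lambda_2)$ is governed by the same asymptotics (now with $\lambda=\lambda_1$, $\bar\lambda=\bar\lambda_2$ not conjugate), the many Gaussian factors recombining through $-|\lambda_1|^2-|\lambda_2|^2=-|\lambda_{12}|^2-\lambda_1\bar\lambda_2-\lambda_2\bar\lambda_1$ into the stated $e^{-|\lambda_1-\lambda_2|^2-\frac12(\lambda_1+\bar\lambda_2)^2}$ and $H$-factor; and $K_{12}^{(N)}$ of \eqref{kern12}, being assembled from $\omega$ and $\kappa^{(N)}$, inherits the limit $K_{12}^{(edge)}$ of \eqref{eq_ker12edge} after the same rank-one cancellation in the $(k-2)\times(k-2)$ determinant.

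The step I expect to be the main obstacle is the edge analysis of the numerator of $\kappa^{(N)}$: one must control three terms of $\frak{F}_{N}$ and $\frak{F}_{N+1}$, all of size $\sqrt N\,e^{2N}e^{(\cdot)\sqrt N}$, whose leading coefficients partially cancel; establish the uniformity in $\lambda_1,\dots,\lambda_k$ on compact sets that is needed for the determinant limit; and --- the genuinely delicate point --- verify that $(N+1)(\frak{F}_{N+1}-\frak{F}_N)$ together with $-(\lambda+\bar\lambda)\sqrt N\,\frak{F}_N$ assemble into exactly the first-order differential expression $\frac{d}{dx}[\,\cdot\,]_{x=0}$ of \eqref{hedge} and not some other derivative; in particular the surviving contribution of the $(1-Y)(1-Z)/(1-YZ)$ term of $\frak{F}_n$ must be reconciled with the derivative of its first two terms, and the apparent singularity at $\lambda+\bar\lambda=\bar x+y$ must be seen to be removable. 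The rest is bookkeeping: tracking the numerous $e^{N}$, $e^{\sqrt N(\cdot)}$ and finite Gaussian factors so that, after division by $f_N$ and the rank-one cancellation in the determinants, everything collapses to the compact expressions \eqref{d11_edge}--\eqref{eq_ker12edge}.
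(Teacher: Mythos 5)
Your proposal follows essentially the same route as the paper's proof of Corollary~\ref{thm_edge}: substitute the edge scaling into the finite-$N$ formulas of Theorem~\ref{thm_fn}, use the erfc/incomplete-Gamma asymptotics for $e_n$ and $f_n$ (the paper records these as \eqref{ase1}, \eqref{ase2}, \eqref{fedge}) to extract the scalar prefactors and the limiting reduced kernel $\kappa^{(edge)}$, and remove the divergent rank-one factor $e^{\sqrt N(\lambda_j-\lambda_i)}$ from the determinant entries by the $N$-dependent conjugation \eqref{ndep_conj}. Your reorganisation of the numerator as $(N+1-\lambda\bar\lambda)\frak{F}_N+(N+1)(\frak{F}_{N+1}-\frak{F}_N)$, viewing the second piece as a discrete $n$-derivative to explain the $\tfrac{d}{dx}\bigl[\cdot\bigr]\big|_{x=0}$ in \eqref{hedge}, is a clean way to organise the same ``lengthy computation'' that the paper carries out starting from \eqref{kern5} but likewise does not display in full.
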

As expected, the translational invariance is lost at the edge. However, it is easy to check that
$D^{(edge,~k)}_{11}$ and $D^{(edge,~k)}_{12}$ are invariant with respect to  a global shift along
the edge of the spectrum,
\[
\lambda_m\rightarrow \lambda_m+i\mu, \bar{\lambda}_m\rightarrow \bar{\lambda}_m-i\mu,~1\leq m\leq k,~\mu \in \mathbb{R}.
\]
This symmetry is just an infinitesimal version of the 
global $U(1)$-symmetry of the complex Ginibre ensemble, which survives in the large-$N$ limit. 

It follows from the statement of Corollary \ref{thm_edge}, that for $k=1$,
\bea\label{edge_d111}
D_{11}^{(edge,~1)}(\lambda_1)=\frac{1}{\sqrt{2\pi^3}}
\left( e^{-\frac{1}{2}(\lambda_1+\bar{\lambda}_1)^2}-\sqrt{2\pi}(\lambda_1+\bar{\lambda}_1)
F(\lambda_1+\bar{\lambda}_1)\right),
\eea
which coincides with the answer for the edge scaling limit of the diagonal overlap 
obtained in 
\cite[Corollary 4.3]{walters}. For $k=2$, we find that
\bea\label{edge_d122}
&&D_{12}^{(edge,~2)}(\lambda_1, \lambda_2)=\\
&&\frac{1}{\pi^2}
\frac{e^{-|\lambda_1-\lambda_2|^2-\frac{1}{2}(\lambda_1+\bar{\lambda}_2)^2}}{\lambda_{12}^2\bar{\lambda}_{12}^2}\nonumber
\frac{d}{dx}\bigg[e^{\frac{(\lambda_1+\bar{\lambda}_2+x)^2}{2}}
\bigg(e^{\lambda_{12}\bar{\lambda}_{12}}F(\lambda_1+\bar{\lambda}_1+x)F(\lambda_2+\bar{\lambda}_2+x)\\
\nonumber
&&
\left.
-F(\lambda_2+\bar{\lambda}_1+x)F(\lambda_1+\bar{\lambda}_2+x)
-\lambda_{12}\bar{\lambda}_{12}
F(\lambda_2+\bar{\lambda}_1)F(\lambda_1+\bar{\lambda}_2+x)\bigg)
\bigg]
\right|_{x=0},
\eea
which is apparently a new expression for the off-diagonal overlap at the edge. 
Again we conjecture the local edge kernels \eqref{limkeredge} and \eqref{eq_ker12edge} to be universal.

As it is easy to check, both the bulk and the edge scaling limits of $D_{11}^{(N,k)}$ 
and $D_{12}^{(N,k)}$ given 
in Corollaries \ref{thm_bulk} and \ref{thm_edge}
are related via the statement of Lemma \ref{thm_rel}. This reflects the fact that the large-$N$ limit 
preserves the analytic properties of conditional overlaps. In particular both the bulk and the edge scaling
limits of   $D_{11}^{(N,k)}$ and $D_{12}^{(N,k)}$ are entire functions of $\lv^{(k)}$ and $\bar{\lv}^{(k)}$.

There is also a different kind of relation between the scaling limits of overlaps: 
as we have already reviewed, the typical magnitude of the overlap in the bulk is $O(N)$, near the edge - $O(\sqrt{N})$. This is
consistent with the fact that the prefactor in (\ref{d11_edge}) diverges as we move back into the bulk:
if $Re(\lambda)=Re(\bar{\lambda})=R$,
\[
\lim_{R\rightarrow -\infty}\left(e^{-(\lambda+\bar{\lambda})^2}
-\sqrt{2\pi}(\lambda+\bar{\lambda})F(\lambda+\bar{\lambda})\right)=
-\sqrt{2\pi}\lim_{R\rightarrow -\infty}(\lambda+\bar{\lambda})=+\infty. 
\]
Therefore, there is no {\em a priori} reason for any relation between conditional overlaps in the bulk and at the edge.
However, simple analysis of the answers presented in Corollaries \ref{thm_bulk} and \ref{thm_edge}
reveals the following relations:
\begin{cor}\label{thm_edge_bulk}
\bea
&&\lim_{R\rightarrow -\infty} \frac{D^{(edge,~k)}_{11}(R\mathbf{1}^{(k)}+\lv^{(k)})}{D^{(edge,~1)}_{11}(R+\lambda_{1})}
=\frac{D^{(bulk,~k)}_{11}(\lv^{(k)})}{D^{(bulk,~1)}_{11}(\lambda_{1})},~k=1,2,\ldots,\\
&&\lim_{R\rightarrow -\infty} \frac{D^{(edge,~k)}_{12}(R\mathbf{1}^{(k)}+\lv^{(k)})}
{D^{(edge,~2)}_{12}(R+\lambda_{1},R+\lambda_{2})}
=\frac{D^{(bulk,~k)}_{12}(\lv^{(k)})}{D^{(bulk,~2)}_{12}(\lambda_{1},\lambda_{2})},~k=2,3,\ldots,
\eea
where $\mathbf{1}^{(k)}=(1,1,\ldots,1)\in \mathbb{R}^k$. 
\end{cor}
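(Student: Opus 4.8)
The plan is to prove Corollary \ref{thm_edge_bulk} by a direct asymptotic analysis of the explicit formulas in Corollaries \ref{thm_bulk} and \ref{thm_edge}, exploiting the structure of the error function as $\mathrm{Re}(a)\to-\infty$. The starting observation is the classical asymptotic $F(a)=\frac12\,\mathrm{erfc}(a/\sqrt2)\to 1$ together with the refined statement that $\sqrt{2\pi}\,a\,e^{a^2/2}F(a)\to 0$ as $\mathrm{Re}(a)\to-\infty$ (indeed $e^{a^2/2}F(a)$ decays, since the Gaussian tail dominates the polynomial prefactor). Concretely I would set every relevant argument to $a=R+(\text{fixed})$ and let $R\to-\infty$ along the real axis. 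Under this substitution one should check that $H(a,b,c,d,f)$, after dividing by the corresponding $k=1$ or $k=2$ prefactor, converges — after the cancellation of the singular prefactor $1-\sqrt{2\pi}\,a\,e^{a^2/2}F(a)\to 1$ — to the bulk reduced kernel $\kappa^{(bulk)}$. The key identity to establish is
\begin{equation*}
\lim_{R\to-\infty} e^{-\bar x y}\,\kappa^{(edge)}\bigl(\bar x+R,\,y+R\mid \lambda+R,\,\bar\lambda+R\bigr)
= \kappa^{(bulk)}(\bar x,\,y\mid \lambda,\,\bar\lambda),
\end{equation*}
or rather the version of it in which the exponential and prefactor pieces are tracked carefully; once this is in hand the determinantal factors in \eqref{d11_edge}–\eqref{d12_edge} converge entrywise to those in \eqref{ovlp11}–\eqref{ovlp12}, and the ratios collapse to the claimed expressions.

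The steps, in order, would be: first, record the two asymptotic facts about $F$ above (decay of $e^{a^2/2}F(a)$ and the finer rate needed inside the $x$-derivative in \eqref{hedge}), being careful that the derivative $\frac{d}{dx}$ at $x=0$ is evaluated before the limit, so one differentiates a product of $F$'s and an $e^{(a+x)^2/2}$ and only then sends $R\to-\infty$; second, analyse the scalar prefactor in \eqref{d11_edge}, namely $e^{-\frac12(\lambda_1+\bar\lambda_1+2R)^2}-\sqrt{2\pi}(\lambda_1+\bar\lambda_1+2R)F(\lambda_1+\bar\lambda_1+2R)$, and show that dividing $D^{(edge,k)}_{11}(R\mathbf 1^{(k)}+\lv^{(k)})$ by $D^{(edge,1)}_{11}(R+\lambda_1)$ exactly cancels this (divergent) scalar and the common factor $1/(1-\sqrt{2\pi}ae^{a^2/2}F(a))$, leaving only the ratio of determinants; third, compute the entrywise limit of $K^{(edge)}_{11}$ to $K^{(bulk)}_{11}$, which reduces (after the weight $\omega^{(edge)}(x+R,\bar x+R\mid\lambda+R,\bar\lambda+R)=\omega^{(bulk)}(x,\bar x\mid\lambda,\bar\lambda)$ trivially matches) to the reduced-kernel identity displayed above; fourth, repeat the same bookkeeping for the off-diagonal case using \eqref{d12_edge}, \eqref{eq_ker12edge}, where the extra scalar $1-\sqrt{2\pi}(\lambda_1+\bar\lambda_2+2R)e^{\frac12(\lambda_1+\bar\lambda_2+2R)^2}F(\cdots)$ and the extra $H$-factor again cancel against $D^{(edge,2)}_{12}$; finally, invoke continuity of the determinant to pass the entrywise limits through.

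The main obstacle will be step three: showing that $e^{-\bar x y}\kappa^{(edge)}$, with all four arguments shifted by $R$, converges to $\kappa^{(bulk)}$. The difficulty is that $H$ in \eqref{hedge} is a derivative of a product of \emph{three} error functions times a Gaussian, all evaluated at arguments tending to $-\infty$, and each $F$ individually tends to $1$ while the Gaussian $e^{(a+x)^2/2}$ blows up; so one must expand to sufficient order — using $F(a+x)=1-\tfrac{1}{\sqrt{2\pi}}e^{-(a+x)^2/2}/(a+x)+\cdots$ — and see that the leading blow-up cancels inside the bracket $e^{-f}F F - FF + fFF$ (note that at $f=0$ this bracket vanishes identically, which is the algebraic reason the cancellation works), leaving a finite limit after the $x$-derivative. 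Keeping track of which terms survive the derivative at $x=0$, and confirming that the surviving finite piece is exactly $\frac{d}{dz}\bigl(\frac{e^z-1}{z}\bigr)$ evaluated at $z=(\bar x-\bar\lambda)(y-\lambda)$ — equivalently at $z=-f$ with the sign conventions of \eqref{limredkeredge} — is the delicate computation; everything else is routine cancellation of divergent scalars and continuity of determinants.
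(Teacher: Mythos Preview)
Your strategy is the paper's: reduce the ratios to entrywise convergence of the kernels, driven by the asymptotics of $F$ at $-\infty$. Two points need correction.

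First, the claim that $\omega^{(edge)}(x+R,\bar x+R\mid\lambda+R,\bar\lambda+R)=\omega^{(bulk)}(x,\bar x\mid\lambda,\bar\lambda)$ is false: the edge weight carries $e^{-(x+R)(\bar x+R)}$ while the bulk weight carries $e^{-(x-\lambda)(\bar x-\bar\lambda)}$, and these differ by a factor that blows up like $e^{-R^2}$. Likewise your displayed identity for $\kappa^{(edge)}$ cannot hold as written, since $\kappa^{(edge)}(\bar x+R,y+R\mid\cdots)$ contains $e^{(\bar x+R)(y+R)}\sim e^{R^2}$. What actually happens is that neither the weight nor the reduced kernel has a finite limit separately; the full kernel $K_{11}^{(edge)}$ converges only after an $R$-dependent \emph{conjugation},
\[
\lim_{R\to-\infty} e^{(R+\bar\lambda)x}\,K_{11}^{(edge)}(R+x,R+\bar x,R+y,R+\bar y\mid R+\lambda,R+\bar\lambda)\,e^{-(R+\bar\lambda)y}
= K_{11}^{(bulk)}(x,\bar x,y,\bar y\mid\lambda,\bar\lambda),
\]
and the conjugating factors drop out of $\det_{2\le i,j\le k}$. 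This is the missing device; without it step three fails.

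Second, you overestimate the difficulty of the $H$-limit. No subleading expansion of $F$ is needed. Since $F(a)=1+O(e^{-a^2/2})$ and $F'(a)=O(e^{-a^2/2})$ as $a\to-\infty$, the bracket in \eqref{hedge} equals $e^{(a+x)^2/2}(e^{-f}-1+f)$ up to exponentially small errors; its $x$-derivative at $0$ is $a\,e^{a^2/2}(e^{-f}-1+f)$, and this $a\,e^{a^2/2}$ is cancelled exactly by the prefactor $-\sqrt{2\pi}/(1-\sqrt{2\pi}ae^{a^2/2}F(a))\sim 1/(ae^{a^2/2})$. The upshot is the clean one-line limit $H(-\epsilon^{-1}+a,\ldots,-\epsilon^{-1}+d,f)\to e^{-f}-1+f$, from which both kernel limits follow immediately once the conjugation above is in place.
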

Notice that a similar relation for
the {\em eigenvalues} is known, see section 5.3. in \cite{honner}, but it is perhaps less surprising, as there is no
rescaling involved in the calculation of eigenvalue intensities in the bulk and at the edge.

Conditional overlaps provide a natural measure of dependence between eigenvectors and eigenvalues.
Recall, that for $\G$ the eigenvalue correlations decay exponentially with the square distance between the eigenvalues
on a large scale of separation, see
e.g. \cite{mehta}.
In contrast, the decay of correlations between eigenvalues and conditional overlaps is algebraic. 
\begin{cor}\label{thm_alg_dec}
(Exact algebraic asymptotic for conditional overlaps.)
Consider conditional overlaps $D_{11}^{(N,k)}$ and $D_{12}^{(N,k)}$ in the bulk scaling limit. Suppose
the eigenvalues $\lambda_1, \lambda_2, \ldots, \lambda_k$ are uniformly separated, 
i.e. 
there exists $L>0$:
\[
|\lambda_{ij}|\geq L, ~1\leq i<j\leq k. 
\]
Then, for large values of $L$,
\bea
D_{11}^{(bulk,k)}(\lv^{(k)})=\left(\frac{1}{\pi}\right)^{k}
\prod_{m=2}^k \left(1-\frac{1}{|\lambda_{m1}|^{4}}\right)+O(e^{-L^2}),\\
D_{12}^{(bulk,k)}(\lv^{(k)})
=-\left(\frac{1}{\pi}\right)^{k}\frac{1}{|\lambda_{12}|^4}
\prod_{m=3}^k \left(1-\frac{1}{\lambda_{m1}^2\bar{\lambda}_{m2}^2}
\right)+O(e^{-L^2}).
\eea
\end{cor}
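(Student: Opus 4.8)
The plan is to read the asymptotics straight off the explicit bulk formulas of Corollary~\ref{thm_bulk}, by expanding the relevant determinants over permutations and discarding the exponentially small terms. Throughout set $g(z):=\tfrac{d}{dz}\bigl(\tfrac{e^z-1}{z}\bigr)=\tfrac{(z-1)e^z+1}{z^2}$, so that $\kappa^{(bulk)}(\bar u,v\mid\lambda,\bar\lambda)=g\bigl((\bar u-\bar\lambda)(v-\lambda)\bigr)$, and split $g=g_0+g_1$ with $g_0(z)=z^{-2}$ (the algebraic part) and $g_1(z)=(z-1)e^z z^{-2}$ (the exponential part).

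\emph{Diagonal overlaps.} In $D_{11}^{(bulk,k)}=\tfrac1\pi\det_{2\le i,j\le k}\bigl(K^{(bulk)}_{11}(\lambda_i,\bar\lambda_i,\lambda_j,\bar\lambda_j\mid\lambda_1,\bar\lambda_1)\bigr)$ the weight $\omega^{(bulk)}$ depends on the row index only, so it pulls out and leaves $\bigl(\prod_{i=2}^k\omega^{(bulk)}(\lambda_i,\bar\lambda_i\mid\lambda_1,\bar\lambda_1)\bigr)\det_{2\le i,j\le k}\bigl(g(\bar\lambda_{i1}\lambda_{j1})\bigr)$. Expanding over permutations, the identity permutation gives $\prod_{i=2}^k M_{ii}$ with $M_{ii}=\omega^{(bulk)}(\lambda_i,\bar\lambda_i\mid\lambda_1,\bar\lambda_1)\,g(|\lambda_{i1}|^2)=\tfrac1\pi\bigl(1-|\lambda_{i1}|^{-4}\bigr)+O(e^{-L^2})$ (the $g_1$-piece is exponentially small since $|\lambda_{i1}|^2\ge L^2$), already producing the leading product $\bigl(\tfrac1\pi\bigr)^k\prod_{m=2}^k(1-|\lambda_{m1}|^{-4})$. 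Any other permutation has in its cycle decomposition a cycle $(i_1\cdots i_r)$ of length $r\ge2$, and although the individual factors of $\prod_s M_{i_s i_{s+1}}$ can be exponentially large, the exponents telescope around the cycle: the weight exponents $-|\lambda_{i_s1}|^2$ combine with the leading kernel exponents $+\bar\lambda_{i_s1}\lambda_{i_{s+1}1}$ and, using $\mathrm{Re}(\bar\lambda_{a1}\lambda_{b1})=\tfrac12(|\lambda_{a1}|^2+|\lambda_{b1}|^2-|\lambda_{ab}|^2)$, the total exponent equals $-\tfrac12\sum_s|\lambda_{i_s i_{s+1}}|^2\le-\tfrac r2 L^2$; the polynomial prefactors are bounded after matching numerators with denominators, and the subleading kernel pieces only make the exponent more negative. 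Hence every non-identity permutation contributes $O(e^{-L^2})$, which yields the first asymptotic.

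\emph{Off-diagonal overlaps.} First I would collapse the $(k-2)\times(k-2)$ determinant of $K^{(bulk)}_{12}$ to one $(k-1)\times(k-1)$ determinant using the Schur-complement identity already used to derive \eqref{eq_ker12bulk}: with $\nu_{ab}:=\kappa^{(bulk)}(\bar\lambda_a,\lambda_b\mid\lambda_1,\bar\lambda_2)=g(\bar\lambda_{a2}\lambda_{b1})$ and $\theta_i:=\omega^{(bulk)}(\lambda_i,\bar\lambda_i\mid\lambda_1,\bar\lambda_2)=\tfrac1\pi(1+w_{ii})e^{-w_{ii}}$, $w_{ii}:=\lambda_{i1}\bar\lambda_{i2}$, one has $K^{(bulk)}_{12}(\lambda_i,\ldots)=\theta_i\bigl(\nu_{ij}-\nu_{1j}\nu_{i2}/\nu_{12}\bigr)$, so $\det_{3\le i,j\le k}(K^{(bulk)}_{12})=\bigl(\prod_{i=3}^k\theta_i\bigr)\det(A)/\nu_{12}$ with $A=(\nu_{ab})$, rows $a\in\{1,3,\dots,k\}$, columns $b\in\{2,3,\dots,k\}$; the factor $\nu_{12}$ cancels against the prefactor in \eqref{ovlp12}, giving $D_{12}^{(bulk,k)}=-\tfrac1{\pi^2}\bigl(\prod_{i=3}^k\theta_i\bigr)\det(A)$. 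Now expand $\det(A)=\det(A_0+A_1)$ by multilinearity in the rows; since every $A_0$-row is a multiple of the single vector $(\lambda_b-\lambda_1)^{-2}$, at most one row may be taken from $A_0$. The term with row $1$ from $A_0$, all other rows from $A_1$, and the natural bijection $1\mapsto2$, $i\mapsto i$ produces, after the crucial identity $\theta_i\,g_1(w_{ii})=\tfrac1\pi(1-w_{ii}^{-2})$ (the $e^{-w_{ii}}$ hidden in $\theta_i$ is cancelled by $e^{w_{ii}}$), exactly $g_0(-|\lambda_{12}|^2)\prod_{i=3}^k\tfrac1\pi(1-w_{ii}^{-2})=\tfrac1{\pi^{k-2}}\tfrac1{|\lambda_{12}|^4}\prod_{m=3}^k\bigl(1-\lambda_{m1}^{-2}\bar\lambda_{m2}^{-2}\bigr)$, the asserted leading term. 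All other contributions are $O(e^{-L^2})$: decomposing the bijection into a path $1\to\cdots\to2$ together with cycles on the leftover indices of $\{3,\dots,k\}$, one pairs the free weights $\theta_i$ with the diagonal factors $g_1(w_{ii})$ of the fixed points; a cycle of length $\ge2$ is exponentially small exactly as for $D_{11}$, a nontrivial path contributes $e^{-\frac12|\lambda_{12}|^2-\frac12\sum|\lambda_{\cdot\cdot}|^2}$ by the same telescoping (the path endpoints giving the extra terms), the leftover term $\det(A_1)$ carries the factor $g_1(-|\lambda_{12}|^2)=O(e^{-L^2})$, and the terms with a row $a_0\ne1$ from $A_0$ are handled by the same path/cycle bookkeeping. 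Collecting everything yields the second asymptotic, which is moreover consistent with Lemma~\ref{thm_rel}.

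\emph{The hard part.} The technical heart is the uniformity of the $O(e^{-L^2})$ error over configurations in which only the separations $|\lambda_{ij}|\ge L$ are prescribed: the individual matrix entries of $K^{(bulk)}_{11}$, $K^{(bulk)}_{12}$ (and $\nu_{ab}$, $\theta_i$) are \emph{not} small and can grow exponentially as the $\lambda$'s spread out, so one must show that these growths cancel when the entries are multiplied around the cycles and paths of the permutation expansion — the telescoping of exponents to $-\tfrac12\sum|\lambda_{ij}|^2$ — and that the accompanying polynomial prefactors are bounded uniformly by constants depending only on $k$, by matching each numerator factor against a suitable denominator factor. For $D_{12}$ the additional subtlety is the exponential $e^{-w_{ii}}$ concealed in the weight $\theta_i$, which blows up when $\mathrm{Re}\,w_{ii}<0$: it must be — and is — annihilated by a compensating $e^{w_{ii}}$ from the diagonal of $g_1$, which is precisely what the split $g=g_0+g_1$ and the identity $\theta_i\,g_1(w_{ii})=\tfrac1\pi(1-w_{ii}^{-2})$ are designed to exhibit.
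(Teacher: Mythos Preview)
Your approach is correct but differs genuinely from the paper's. You work directly with the determinantal formulae of Corollary~\ref{thm_bulk}: expanding the determinant over permutations (for $D_{11}$) or over bijections via a path/cycle decomposition (for $D_{12}$), isolating the identity/natural contribution as the leading algebraic term, and showing that every other contribution has a telescoped real exponent bounded by $-\tfrac12\sum|\lambda_{\cdot\cdot}|^2\le -\tfrac12 L^2$ (or better). The paper instead takes a detour through the identities \eqref{prodd11}--\eqref{prodd12}: it first rewrites the bulk determinant, via a rank-one/bordered-determinant trick and the observation $(1-\bar\lambda_{m1}\lambda_{n1})e^{\bar\lambda_{m1}\lambda_{n1}}=\frak{D}_n\, e^{\bar\lambda_{m1}\lambda_{n1}}$ with $\frak{D}_n=1-\lambda_{n1}\partial_{\lambda_n}$, as a product of commuting first-order operators $\prod_n(\frak{D}_n-|\lambda_{n1}|^2)$ acting on the \emph{eigenvalue} correlation function $\rho^{(bulk,k)}$; it then simply invokes the known Gaussian clustering $\rho^{(bulk,k)}=\pi^{-k}+O(e^{-L^2})$ and $\partial_{\lambda_m}\rho^{(bulk,k)}=O(e^{-L^2})$ (and obtains $D_{12}$ from $D_{11}$ via Lemma~\ref{thm_rel}). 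Your route is more self-contained and makes explicit where each exponentially small term comes from; the paper's route yields as a by-product the relations \eqref{prodd11}--\eqref{prodd12} between overlaps and $\rho^{(bulk,k)}$, which the authors flag as being of independent interest, and reduces the error control to the standard decay of Ginibre eigenvalue correlations. Both arguments leave the uniformity of the $O(e^{-L^2})$ at essentially the same informal level.
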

Notice the that Corollary \ref{thm_alg_dec} implies an asymptotic factorisation
of conditional overlaps.  Namely, it establishes the existence of 
functions $P(\cdot \mid \lambda_1)$ and $Q(\cdot \mid \lambda_1, \lambda_2)$
on $\C$ such that 
\begin{eqnarray*}
\pi^k D_{11}^{(bulk, k)}(\lv^{(k)})=
\prod_{m=2}^kP(\lambda_m\mid \lambda_1)+O(e^{-L^2}),\\ 
\pi^k |\lambda_{12}|^4 D_{12}^{(bulk, k)}(\lv^{(k)})=-
\prod_{m=3}^kQ(\lambda_m\mid \lambda_1,\lambda_2)+O(e^{-L^2}).
\end{eqnarray*}
This statement is a 
consequence of a relation between conditional overlaps in the bulk  and correlation functions
for eigenvalues, which might be of independent interest:
\bea\label{prodd11}
D_{11}^{(bulk,~k)}(\lv^{(k)})&=&(-1)^{k-1} \prod_{m=2}^k \frac{1+|\lambda_{m1}|^2}{|\lambda_{m1}|^4}
\left(1-|\lambda_{m1}|^2-\lambda_{m1}\frac{\partial}{\partial \lambda_{m}}\right)\rho^{(bulk, k)}(\lv^{(k)}),\\
D_{12}^{(bulk,~k)}(\lv^{(k)})&=& \frac{(-1)^{k-1}}{|\lambda_{12}|^4}\left(1-\lambda_{21}\frac{\partial }{\partial \lambda_{2}}\right)\prod_{m=3}^k \frac{1+\lambda_{m1}\lambdab_{m2}}{\lambda_{m1}^2\lambdab_{m2}^2}
\left(1-\lambda_{m1}\lambdab_{m2}-\lambda_{m1}\frac{\partial}{\partial \lambda_{m}}\right)
\nonumber\\
&&\times\rho^{(bulk,k)}(\lv^{(k)}).
\label{prodd12}
\eea
Notice that the differential operators entering the product in the right hand side of (\ref{prodd11}) and (\ref{prodd12}) commute, so there is no ambiguity in the above formulae due to the ordering, see Section \ref{proof_decay} for the derivation.
We conjecture that the algebraic decay and the factorisation property for the conditional
overlaps stated in Corollary \ref{thm_alg_dec}
remain true in the global bulk scaling limit as well.


\section{Proofs}\label{proofs}
\subsection{General set-up for the proof of Theorem \ref{thm_fn}. The determinantal
structure.}\label{sec_setup}
Recall expressions  (\ref{cmgenius11}) and (\ref{cmgenius12}) for  
the overlaps conditioned on $N$ eigenvalues.
Averaging over all the eigenvalues but $\lambda_1, \ldots, \lambda_k$, we get
\bea
D_{11}^{(N,k)}(\lv^{(k)})&=&\frac{1}{Z_N}\frac{N!}{(N-k)!}\int_{\C^{N-k}}\prod_{i=k+1}^{N}d\lambda_i
d\bar{\lambda}_i
|\Delta^{(N)}(\lambda_1,\lambda_2,\ldots, \lambda_N)|^2 e^{-\sum_{j=1}^N |\lambda_j|^2}
\nonumber\\
&&
\quad\quad\quad\quad\times 
\prod_{\ell=2}^N \left(1+\frac{1}{|\lambda_1-\lambda_\ell|^2}\right),\\
D_{12}^{(N,k)}(\lv^{(k)})&=&\frac{1}{Z_N}\frac{N!}{(N-k)!}\int_{\C^{N-k}}\prod_{i=k+1}^{N}
d\lambda_i
d\bar{\lambda}_i
|\Delta^{(N)}(\lambda_1,\lambda_2,\ldots, \lambda_N)|^2 e^{-\sum_{j=1}^N |\lambda_j|^2}
\nonumber\\
&&\quad\quad\quad\quad\times
\frac{1}{|\lambda_1-\lambda_2|^2}\prod_{\ell=3}^N \left(1+\frac{1}{(\lambda_1-\lambda_\ell)
\left(\bar{\lambda}_2-\bar{\lambda}_\ell\right)}\right),
\eea
where $Z_N=\pi^N \prod_{j=1}^N j!$ is the normalisation constant. 
Therefore, 
\bea
D_{11}^{(N,k)}(\lv^{(k)})&=&\frac{e^{-|\lambda_1|^2}}{Z_N}\frac{N!}{(N-k)!}\int_{\C^{N-k}}\prod_{i=k+1}^{N}d\lambda_i
d\bar{\lambda}_i
|\Delta^{(N-1)}(\lambda_2,\ldots, \lambda_N)|^2 \nonumber\\
&&\times \prod_{m=2}^N \pi \omega( \lambda_m,\bar{\lambda}_m\mid \lambda_1,\bar{\lambda}_1),
\label{eqnd11int}\\
D_{12}^{(N,k)}(\lv^{(k)})&=&-\frac{e^{-|\lambda_1|^2-|\lambda_2|^2}}{Z_N}\frac{N!}{(N-k)!}\int_{\C^{N-k}}
\prod_{i=k+1}^{N}d\lambda_i
d\bar{\lambda}_i
\Delta^{(N-1)}(\lambda_2,\lambda_3,\ldots, \lambda_N)\nonumber\\
&&\times
\Delta^{(N-1)}(\bar{\lambda}_1,\bar{\lambda}_3,\ldots, \bar{\lambda}_N)
\prod_{m=3}^N\pi \omega(\lambda_m, \bar{\lambda}_m\mid \lambda_1, \bar{\lambda}_2),\label{eqnd12int}
\eea
where the integration measure is defined in both cases by the following function on $\C^3$:
\bea\label{eqnwt}
\omega(z,\bar{z}|u,v)=\frac{1}{\pi}(1+(z-u)(\bar{z}-v))e^{-z\bar{z}},
~z,u,v \in \C.
\eea

In order to determine $D_{12}^{(N,k)}$ using Lemma \ref{thm_rel}, proved in Section \ref{proof_of_lemma_1} below, we need to calculate  $D_{11}^{(N,k)}$  treating 
the 
complex variables $\lv^{(k)}$ and $\bar{\lv}^{(k)}$ as independent. 
The first steps are  standard, see 
e.g.  \cite{mehta}.
Using elementary linear algebra,
\bea\label{det}
&&|\Delta^{(N-1)}(\lambda_2,\ldots, \lambda_N)|^2
\prod_{m=2}^N \omega(\lambda_m,\bar{\lambda}_m\mid \lambda_1,\bar{\lambda}_1)
\nonumber\\
&=&\prod_{q=0}^{N-2}\langle P_q , Q_q \rangle
 \det_{2\leq i,j\leq N}
 \left(K^{(N-1)}_{11}(\lambda_i,\bar{\lambda}_i,\lambda_j,\bar{\lambda}_j\mid
\lambda_1, \bar{\lambda}_1 )\right),
\eea
where $K^{(N)}_{11}$ is the following kernel (of an integral operator):
\bea\label{eq_kerfn}
K^{(N)}_{11}(x,\bar{x},y,\bar{y}\mid \lambda_1, \bar{\lambda}_1)=\sum_{k=0}^{N-1} 
\frac{\overline{P_{k}(x)} Q_{k}(y)}{\langle P_{k}, Q_k\rangle}
\omega(x, \bar{x}\mid \lambda_1, \bar{\lambda}_1),
\eea
and $\{P_i,Q_i\}_{i=0}^{\infty}$ are holomorphic monic
polynomials on $\C$, bi-orthogonal with respect to the weight $\omega(\cdot,\cdot \mid 
\lambda_1, \bar{\lambda}_1)$:
\bea\label{eq_bop}
\langle P_i, Q_j\rangle :=\int_{\C} dzd\bar{z} \omega(z, \bar{z}
\mid \lambda_1, \bar{\lambda}_1)
\overline{P_i(z)} Q_j(z)=\langle P_i,Q_i\rangle \delta_{i,j},~0\leq i,j<\infty.
\eea
Notice that the bi-orthogonal polynomials depend on $\lambda_1$ and $\bar{\lambda}_1$ as parameters,
but we will suppress this dependence in order to simplify the notation. 
We will establish the existence of the bi-orthogonal polynomials and the associated kernel (\ref{eq_kerfn})
for the concrete weight $\omega$ by constructing
them explicitly, for a general discussion see \cite{akemann_vernizzi}.

In what follows it will  be convenient to define the reduced kernel $\kappa^{(N)}$ via
\bea
K^{(N)}_{11}(x,\bar{x},y,\bar{y}\mid \lambda_1, \bar{\lambda}_1)&=&\kappa^{(N)}(\bar{x},y\mid \lambda_1, \bar{\lambda}_1) \omega(x, \bar{x}\mid \lambda_1, \bar{\lambda}_1)\nonumber,\\
\kappa^{(N)}(\bar{x},y\mid \lambda_1, \bar{\lambda}_1)&=&
\sum_{k=0}^{N-1} \frac{\overline{P_{k}(x)} Q_{k}(y)}{\langle P_k, Q_{k} \rangle}.
\label{eq_redk}
\eea
Notice that the kernel $K^{(N)}_{11}$ is self-reproducing, 
\[
K^{(N)}_{11}*K^{(N)}_{11}=K^{(N)}_{11}.
\]
(Equivalently, the corresponding integral operator acting on polynomials
is a projection.) 
Therefore, Dyson's theorem is applicable to the calculation of the integral in (\ref{eqnd11int}).
\footnote{The self-adjointness of a kernel is not necessary for the applicability of Dyson's theorem.}
 Substituting (\ref{det})
into (\ref{eqnd11int}) and applying the theorem, we find that
\bea\label{eqd11_gen}
D_{11}^{(N,k)} (\lv^{(k)})&=&\frac{\pi^{N-1} N!}{Z_N} \prod_{q=0}^{N-2}
\langle P_q ,Q_q \rangle\cdot
e^{-|\lambda_1|^2}\nonumber\\
&&\times\det_{2\leq i,j\leq k}
\left(K^{(N-1)}_{11}(\lambda_i,\bar{\lambda}_i,\lambda_j,\bar{\lambda}_j\mid \lambda_1,\bar{\lambda}_1)\right).
\eea
Observe the emergence of the determinantal structure for the diagonal conditional overlaps.
The off-diagonal overlap $D_{12}^{(N,k)}$ as a function on $\C^{2k}$ can now be computed using Lemma \ref{thm_rel}:
\bea
D_{12}^{(N,k)}(\lv^{(k)})&=&-\frac{\pi^{N-1}N!}{Z_N}
\frac{e^{-\lambda_{12}\bar{\lambda}_{12}-\lambda_1 \bar{\lambda}_2}}{1-\lambda_{12}\bar{\lambda}_{12}}
\hat{T}\left(
\prod_{q=0}^{N-2}\langle P_q , Q_q \rangle\right)
\nonumber\\
&&\times \det \left(
\begin{array}{c|c}
 K^{(N-1)}_{11}(\lambda_2,\bar{\lambda}_1,\lambda_2,\bar{\lambda}_1\mid \lambda_1,\bar{\lambda}_2)&
 K^{(N-1)}_{11}(\lambda_2,\bar{\lambda}_1,\lambda_j,\bar{\lambda}_j\mid \lambda_1,\bar{\lambda}_2), \\
 ~&3\leq j\leq k\\
 \hline
 ~&~\\
 K^{(N-1)}_{11}(\lambda_i,\bar{\lambda}_i,\lambda_2,\bar{\lambda}_1\mid \lambda_1,\bar{\lambda}_2),&
 K^{(N-1)}_{11}(\lambda_i,\bar{\lambda}_i,\lambda_j,\bar{\lambda}_j\mid \lambda_1,\bar{\lambda}_2),\\
 3\leq i\leq k&3\leq i,j\leq k
\end{array}\right).\nonumber\\
\label{D12Nk}
\eea 
It is worth stressing that $\prod_{q=0}^{N-2}\langle P_q , Q_q \rangle$
 is a function of $\lambda_1, \lambdab_1$, therefore the action
of $\hat{T}$ on this product is non-trivial. Recall also that 
$\lambda_{ij}:=\lambda_i-\lambda_j$, $\lambdab_{ij}:=\lambdab_i-\lambdab_j$.
The determinant in the above formula can be re-written using the 
following determinantal identity
\bea\label{eq_ti}
\det_{1\leq i,j\leq n}(a_{ij})=a_{11}\det_{2\leq i,j\leq n}
\left(
a_{11}^{-1}
\det\left(\begin{array}{cc}
a_{11} & a_{1j}\\
a_{i1} & a_{ij}
\end{array}
\right)\right),~a_{11}\neq 0.
\eea
This follows from a well known identity for block determinants, see e.g. \cite{sylv}:
\begin{equation}
\det\left(
\begin{array}{cc}
A&B\\
C&D\\
\end{array}
\right)=\det(A)\det\left(D-CA^{-1}B\right),
\label{blockId}
\end{equation}
valid for invertible matrices $A$. Namely, choosing for $A=a_{11}\neq0$ in \eqref{eq_ti} we have 
\begin{equation}
\det_{2\leq i,j\leq N}\left(
\begin{array}{cc}
a_{11}&a_{1j}\\
a_{i1}&a_{ij}\\
\end{array}
\right)
=\det(a_{11})\det_{2\leq i,j\leq N}\left(a_{ij}-a_{i1}a_{11}^{-1}a_{1j}\right)
=a_{11}\det_{2\leq i,j\leq N}\left(a_{11}^{-1}(a_{11}a_{ij}-a_{i1}a_{1j})\right).
\end{equation}
Eq. \eqref{eq_ti} can be seen as the simplest of  Tanner's identities for determinants and Pfaffians, see 
e.g. \cite{tanner} for a review.
Applying \eqref{eq_ti} to \eqref{D12Nk} results into
\bea\label{eqd12_gen}
&&D_{12}^{(N,k)}(\lv^{(k)})=-\frac{\pi^{N-2}N!}{Z_N}
 \hat{T}\left(
\prod_{q=0}^{N-2}\langle P_q ,Q_q \rangle\right)
e^{-\lambda_{1}\bar{\lambda}_{1}-\lambda_2 \bar{\lambda}_2}
\kappa^{(N-1)}\left(\bar{\lambda}_1,\lambda_2|\lambda_1, \bar{\lambda}_2\right)
\nonumber\\
&&\times \det_{3\leq i,j\leq k} \left(\frac{\omega(\lambda_i,\bar{\lambda}_i\mid \lambda_1,\bar{\lambda}_2)}
{\kappa^{(N-1)}\left(\bar{\lambda}_1,\lambda_2|\lambda_1, \bar{\lambda}_2\right)}
\det\left(
\begin{array}{cc}
\kappa^{(N-1)}(\bar{\lambda}_1,\lambda_2\mid \lambda_1,\bar{\lambda}_2)&
\kappa^{(N-1)}(\bar{\lambda}_1,\lambda_j\mid \lambda_1,\bar{\lambda}_2) \\
\kappa^{(N-1)}(\bar{\lambda}_i,\lambda_2\mid \lambda_1,\bar{\lambda}_2)&
\kappa^{(N-1)}(\bar{\lambda}_i,\lambda_j\mid \lambda_1,\bar{\lambda}_2)
\end{array}\right)\right),
\nonumber\\
\eea 
which explains the structure of the claim 
 (\ref{ovlpn12}), (\ref{kern12}) of Theorem \ref{thm_fn}.
The final answers for the conditional overlaps are obtained by evaluating (\ref{eqd11_gen})
and (\ref{eqd12_gen}) on the real surface $\C^{k}\subset \C^{2k}$, specified 
by the equations $\lv^{(k)}=\overline{\bar{\lv}^{(k)}}$.

The proof of Theorem \ref{thm_fn} is therefore reduced to the calculation of the reduced kernel $\kappa^{(N)}$
and the inner products of the bi-orthogonal polynomials $\langle P_q, Q_q\rangle$ for $q=0,1,2,\ldots$
The bi-orthogonal polynomials themselves are not the subject of our current investigation,
therefore it is reasonable to follow the approach of \cite{borodin}
and derive expressions
for $\kappa^{(N)}$ and $\langle P_q, Q_q\rangle$ directly in terms of the moment matrix $M$ defined as
\bea\label{eq_mmgen}
M_{ij}=\langle z^i, z^j\rangle,~i,j\geq 0.
\eea
Let $(L,D,U)$ be the LDU-decomposition of $M$. That is $D$ is the diagonal matrix, $L$ and $U^T$ are the lower
triangular matrices with the diagonal entries equal to $1$ such that
\bea\label{eq_ldu}
M=LDU.
\eea 
Therefore, $L^{-1}M U^{-1}=D$. Re-writing this identity in components we find that
\bea\label{eq_ip1}
\langle P_{k},Q_{l}\rangle=D_{kk}\delta_{k,l},~k,l\geq 0,
\eea
where 
\bea\label{eq_bop1}
P_{k}(z)=\sum_{m=0}^{k} (\bar{L}^{-1})_{km}z^m,\nonumber\\
\\
\nonumber
Q_{k}(z)=\sum_{m=0}^{k} z^m(U^{-1})_{mk},
\eea
for $k\geq 0$. We see that $\{P_{q}, Q_q\}_{k\geq 0}$ is the set of holomorphic monic polynomials 
bi-orthogonal with respect to the weight $\omega(\cdot, \cdot \mid \lambda_1, \bar{\lambda}_1)$.
Comparing (\ref{eq_ip1}) with (\ref{eq_bop}) we find that
\bea\label{eq_ip22}
\langle P_k,Q_k\rangle=D_{kk}, ~k\geq 0.
\eea
Substituting (\ref{eq_ip22}) and (\ref{eq_bop1}) into the expression (\ref{eq_redk})
for the reduced kernel we also find that
\bea\label{eq_redker}
\kappa^{(N)} (\bar{z}, z\mid \lambda_1, \bar{\lambda}_1)=\sum_{i,j=0}^{N-1} z^i C^{(N-1)}_{ij} \bar{z}^j,
\eea
where 
\bea
C_{ij}^{(N)}=\sum_{k=0}^{N} (U^{-1})_{ik}\frac{1}{D_{kk}} (L^{-1})_{kj},~i,j\geq 0.
\eea
At least formally, the semi-infinite matrix $C^{(N)}$ converges to $M^{-1}$ as $N\rightarrow \infty$.
Perhaps less trivially, as a consequence of 
the 
Gram-Schmidt orthogonalisation procedure, it can be also characterised as the inverse of the $(N+1)\times (N+1)$ moment
matrix $(\langle z^i, z^j \rangle)_{0\leq i,j}\leq N$, see \cite{borodin}.

We  are not attempting to justify the above formal operations with semi-infinite matrices in general, but
in Section \ref{sec_finiteN} a justification will be given 
for the integration weight at hand.

Now the proof of Theorem \ref{thm_fn} has been reduced to the calculation of the LDU decomposition of
of the moment matrix $M$.\\
\\
{\bf Remark.} We see that the expression for the off-diagonal overlap $D_{12}^{(N,k)}$ is determinantal with the kernel
expressed as the $2\times 2$ determinant of a matrix built out of the kernel corresponding to the weight
$\omega(x,\bar{x}\mid\lambda, \bar{\lambda})$. Such a structure is to be
expected from the general theory of orthogonal polynomials in the complex plane developed in
\cite{akemann_vernizzi}. Really, relation (\ref{eqnd12int}) can be re-written as
\bea
D_{12}^{(N,k)}(\lv^{(k)})&=&-\frac{e^{-|\lambda_1|^2-|\lambda_2|^2}}{Z_N}
\frac{N!}{(N-k)!}
\int_{\C^{N-k}}
\prod_{i=k+1}^{N}d\lambda_i
d\bar{\lambda}_i
|\Delta^{(N-2)}(\lambda_3,\lambda_4,\ldots, \lambda_N)|^2 \nonumber\\
&&\times\prod_{m=3}^N(\lambda_2-\lambda_m)(\bar{\lambda}_1-\bar{\lambda}_m)\pi \omega(\lambda_m, \bar{\lambda}_m\mid \lambda_1, \bar{\lambda}_2).
\nonumber
\eea
By Dyson's theorem, the right hand side of this expression is proportional to the $(k-2)\times (k-2)$ determinant
of the kernel associated with holomorphic polynomials, which are bi-orthogonal with respect to the weight
\[
(u-z)(\bar{v}-\bar{z})\omega(z, \bar{z}\mid v, \bar{u}).
\]
Such a kernel can be expressed in terms of a $2\times 2$ determinant of the kernel associated
with the weight $\omega(\cdot, \cdot \mid \lambda,\bar{\lambda})$, see formula (3.10)
of \cite{akemann_vernizzi}, 
which can be considered as a generalisation of Christoffel's theorem for orthogonal polynomials in the complex plane.
Our present calculation can be therefore regarded as a short re-derivation of the general expression of 
\cite{akemann_vernizzi}
in the particular context of integration weights associated with the overlaps. The main tools used in our
calculation are the analyticity and 
determinant identities.\\
\\

\subsection{Lemma \ref{thm_rel}}\label{proof_of_lemma_1}\label{sec_La1}
It follows from 
(\ref{eqnd11int}) and (\ref{eqnd12int}) 
that both $D_{11}^{(N,k)}(\lv^{(k)})e^{\sum_{m=1}^k\lambda_k \bar{\lambda}_k}$
and $D_{12}^{(N,k)}(\lv^{(k)})e^{\sum_{m=1}^k\lambda_k \bar{\lambda}_k}$ 
are polynomials in $\lv^{(k)},\bar{\lv}^{(k)}$. 
Therefore $D^{(N,k)}_{11}$ and $D_{12}^{(N,k)}$ are entire functions on $\C^{2k}$.

Recall the definition of the transposition $\hat{T}$ acting on functions on $\C^{2k}$:
\bea
\hat{T}f(\lambda_1, \bar{\lambda}_1, \lambda_2, \bar{\lambda}_2,\ldots)
=f(\lambda_1, \bar{\lambda}_2, \lambda_2, \bar{\lambda}_1,\ldots)
\eea
Comparing (\ref{eqnd11int}) and (\ref{eqnd12int}), we see that for $k\geq 2$,
\bea\label{rltn1}
D_{12}^{(N,k)}(\lv^{(k)})=- \frac{e^{-|\lambda_1-\lambda_2|^2}}{1-|\lambda_1-\lambda_2|^2}\hat{T}D_{11}^{(N,k)}(\lv^{(k)}),  
\eea
for any $(\lv^{(k)},\bar{\lv}^{(k)}) \in \C^{2k}$. Lemma 1 is proved.\\
\subsection{Heuristic derivation of $N= \infty$ results in the bulk  assuming T-invariance}\label{sec_heur}
The task of calculating bi-orthogonal polynomials (\ref{eq_bop}) is considerably simpler at the
special point 
$\lambda_1=\bar{\lambda}_1=0$. In this case the weight function 
reduces to
\bea
\omega(z,\bar{z}\mid 0,0)=\frac{1}{\pi}(1+|z|^2)e^{-|z|^2},
\eea
which is an $U(1)$-invariant function. The bi-orthogonal polynomials associated with $U(1)$-invariant weights are just the monomials,
\bea
P_{k}(x)=Q_{k}(x)=x^k, ~k\geq 0.
\eea
Their inner products can also be computed explicitly,
\bea\label{eq_monip}
\langle P_k ,Q_k\rangle= k! (k+2),~k\geq 0,
\eea
leading to the following kernel:
\bea
K^{(N)}_{11}(x,\bar{x},y,\bar{y}\mid 0,0)=\frac{1}{\pi}(1+|x|^2)e^{-|x|^2}\sum_{k=0}^{N-1}
\frac{(\bar{x}y)^k}{(k+2)k!}.
\eea
As $N\rightarrow \infty$ , the limiting kernel in the bulk is
\bea
K^{(bulk)}_{11}(x,\bar{x},y,\bar{y}\mid 0,0)&=&\frac{1}{\pi}(1+|x|^2)e^{-|x|^2} \kappa^{(bulk)}(\bar{x},y\mid 0,0),
\eea
where
\bea
 \kappa^{(bulk)}(\bar{x},y\mid 0,0)=
\frac{1}{(\bar{x}y)^2}+\left(\frac{1}{(\bar{x}y)}-\frac{1}{(\bar{x}y)^2}\right)e^{\bar{x}y}.
\eea
Alternatively, we can write
\bea
\kappa^{(bulk)}(\bar{x},y\mid 0,0)=
\left.\frac{d}{dz}\left(\frac{e^z-1}{z}\right)
\right|_{z=\bar{x}y}.
\eea
The $N$-dependent pre-factor in the right hand side of (\ref{eqd11_gen}) is $N/\pi$, which leads to 
the following answer for the conditional overlap in the bulk:
\bea
D_{11}^{(bulk,~k)}(0,\lambda_2, \ldots, \lambda_k)=
\frac{1}{\pi} \det_{2\leq i,j\leq k}
\left(K^{(bulk)}_{11}(\lambda_i,\bar{\lambda}_i,\lambda_j,\bar{\lambda}_j\mid 0,0)
\right)
\eea

Let us $assume$ the extended translational invariance for the diagonal
overlaps regarded as functions on $C^{2k}$, which means that 
$D_{11}^{(bulk,~k)}$ is invariant under the shift
$\lambda_m \rightarrow \lambda_m+\epsilon$, $\bar{\lambda}_m\rightarrow \bar{\lambda}_m+\bar{\epsilon}$,
$m=1,2,\ldots , k$, where $\epsilon,\bar{\epsilon}$ are independent complex variables. Then
\bea
D_{11}^{(bulk,~k)}(\lv^{(k)})&=&D_{11}^{(bulk,~k)}(0,\lambda_2-\lambda_1, \ldots, \lambda_k-\lambda_1)\nonumber
\\
&=&\frac{1}{\pi} \det_{2\leq i,j\leq k}
\left(K^{(bulk)}_{11}(\lambda_i-\lambda_1,
\bar{\lambda}_i-\bar{\lambda}_1,
\lambda_j-\lambda_1,\bar{\lambda}_j-\bar{\lambda}_1\mid 0,0)\right).
\eea
We conclude that 
\bea\label{eqd11_heur}
D_{11}^{(bulk,~k)}(\lv^{(k)})=\frac{1}{\pi} \det_{2\leq i,j\leq k}
\left(K^{(bulk)}_{11}(\lambda_i,
\bar{\lambda}_i,
\lambda_j,\bar{\lambda}_j\mid \lambda_1,\bar{\lambda}_1)\right),
\eea
where
\bea
K^{(bulk)}_{11}(x,\bar{x},y,\bar{y}\mid \lambda,\bar{\lambda})&=&\frac{1}{\pi}(1+|x-\lambda|^2)e^{-|x-\lambda|^2} \kappa^{(bulk)}(\bar{x},y\mid \lambda,\bar{\lambda}),
\eea
and the reduced kernel is
\bea
 \kappa^{(bulk)}(\bar{x},y\mid \lambda,\bar{\lambda})=
\left.\frac{d}{dz}\left(\frac{e^z-1}{z}\right)
\right|_{z=(\bar{x}-\bar{\lambda})(y-\lambda)},
\eea
which agrees 
with the statement (\ref{ovlp11}) of Corollary \ref{thm_bulk}.

To calculate the off-diagonal conditional overlaps, let us $assume$
that the relation (\ref{rltn}) remains valid at $N=\infty$ as well. Then
\bea
D_{12}^{(bulk,~k)}(\lv^{(k)})=-
\frac{1}{\pi}
\frac{e^{-|\lambda_1-\lambda_2|^2}}{1-|\lambda_1-\lambda_2|^2}
\hat{T}\det_{2\leq i,j\leq k}
\left(K^{(bulk)}_{11}(\lambda_i,
\bar{\lambda}_i,
\lambda_j,\bar{\lambda}_j\mid \lambda_1, 
\bar{\lambda}_1)\right).
\eea
Applying 
the determinant 
identity (\ref{eq_ti}), we find
\bea\label{eqd12_heur}
D_{12}^{(bulk,~k)}(\lv^{(k)})=-\frac{1}{\pi^2} 
\kappa^{(bulk)}(\bar{\lambda}_1, \lambda_2\mid \lambda_1, \bar{\lambda}_2)
\det_{3\leq i,j\leq k}
\left(K^{(bulk)}_{12}(\lambda_i,
\bar{\lambda}_i,
\lambda_j,\bar{\lambda}_j\mid \lambda_1,\bar{\lambda}_2)\right),
\eea
where
\bea\label{eq_ker12}
K^{(bulk)}_{12}(\lambda_i,
\bar{\lambda}_i,
\lambda_j,\bar{\lambda}_j\mid \lambda_1,\bar{\lambda}_2)&=&
\frac{\omega^{(bulk)}(\lambda_i, \bar{\lambda}_i\mid \lambda_1, \bar{\lambda}_2)}
{\kappa^{(bulk)}(\bar{\lambda}_1, \lambda_2\mid \lambda_1, \bar{\lambda}_2)}\\
&&\times\det
\left(\begin{array}{cc}
\kappa^{(bulk)}(\bar{\lambda}_1, \lambda_2\mid \lambda_1, \bar{\lambda}_2)&
\kappa^{(bulk)}(\bar{\lambda}_1, \lambda_j\mid \lambda_1, \bar{\lambda}_2)\\
\kappa^{(bulk)}(\bar{\lambda}_i, \lambda_2\mid \lambda_1, \bar{\lambda}_2)&
\kappa^{(bulk)}(\bar{\lambda}_i, \lambda_j\mid \lambda_1, \bar{\lambda}_2)
\end{array}\right),\nonumber
\eea
which agrees with the statement (\ref{ovlp12}) of Corollary \ref{thm_bulk}.

The above calculation is rather simple, but non-rigorous - it rests on the assumptions
of the extended translational invariance of conditional overlaps in the bulk and the validity
of Lemma \ref{thm_rel} at $N=\infty$. We could try justifying these assumptions using 
analysis, but as it turns out, it is possible to obtain a fairly simple explicit expression
for the kernel at $N<\infty$, thus enabling the study of conditional overlaps not only in the bulk
of the spectrum, but also near the spectral edge. Notice that in the latter case the translational invariance is absent in principle.

\subsection{The kernel for $N<\infty$}\label{sec_finiteN}
\subsubsection{The LDU decomposition of the moment matrix.}
We will use the relation between the kernel and the moment matrix established
in Section \ref{sec_setup}. An explicit computation of $\langle z^i, z^j\rangle$ with the weight
$\omega(\cdot,\cdot\mid, \lambda, \bar{\lambda})$ defined (\ref{eqnwt}) gives
\bea
M_{ij}= i!\left[\delta_{ij}\left((1+\lambda \bar{\lambda})+(i+1) \right)- \delta_{i+1,j} \lambda(i+1)
- \delta_{i,j+1}\bar{\lambda} 
\right],~i,j\geq 0.
\eea
Crucially, the moment matrix is tri-diagonal, which makes explicit calculations leading to the kernel
possible. The recursive formulae for computing the LDU decomposition and the inverse of a tri-diagonal
matrix are well-known. What makes our case special however, is that the recursions 
we get can be solved exactly in terms of the exponential polynomials. At some point it would be interesting
to understand the algebraic reasons for the exact solvability of our problem, but in the mean time we adopt
 a tour de force
approach.

Let $\mu$ be the following tri-dagonal matrix:
\bea
\mu_{ij}=\delta_{ij}\left((1+\lambda \bar{\lambda})+(i+1) \right)- \delta_{i+1,j} \lambda(i+1)
- \delta_{i,j+1}\bar{\lambda}
,~i,j\geq 0.
\eea
As $M$ is the product of of $\mu$ and the diagonal matrix with entries 
$i!$, the LDU decomposition
of $M$ is easy to construct from the LDU decomposition of $\mu$. If
\bea
\mu= LDU,
\eea 
where $D_{pq}=d_p \delta_{pq}, L_{pq}=\delta_{pq}+l_p \delta_{p,q+1}, U_{pq}=\delta_{pq}+u_{q}\delta_{q,p+1}$,
$p,q \geq 0$, then
\bea
u_{p+1}&=&-\frac{(1+p) \lambda}{d_p},~p\geq 0,\\
l_{p+1}&=&-\frac{\bar{\lambda}}{d_p}, ~p\geq 0,\\
d_p&=&-d_{p-1} l_p u_{p} \mathbbm{1}_{p \geq 1}+2+\lambda \bar{\lambda} +p,~p\geq 0, 
\eea
defining $d_{-1}\equiv0$.
Let $x=\lambda \bar{\lambda}$. To determine the LDU decomposition of $\mu$ we have to solve the
first order non-linear recursion for $d_p$'s:
\bea
d_p=2+x+p-\frac{px}{d_{p-1}},~p\geq 1,\nonumber\\
\label{eq_nlrec}\\
d_0=2+x.\nonumber
\eea
This recursion can be linearised via the substitution $d_p=\frac{r_{p+1}}{r_p}$, which, 
upon choosing $r_0=1$,
gives
\bea
r_{p+1}+px r_{p-1}=(2+x+p)r_p,~p\geq 1,\nonumber\\
\label{eq_lrec}\\
r_1=2+x.\nonumber
\eea
The unique solution of (\ref{eq_lrec}) is
\bea
r_p(x)=p! \sum_{m=0}^p \frac{(p+1-m)}{m!}x^m=(p+1)!e_{p}(x)-p! x e_{p-1}(x),
\eea
where $e_p(x)=\sum_{k=0}^p\frac{x^p}{p!}$ is the exponential polynomial of degree $p$.

Therefore,
\bea
r_p(x)=p!f_{p}(x),~p=0,1,\ldots,
\eea
where $f_p$'s are the polynomials defined in (\ref{fpols}).
Converting the LDU decomposition of $\mu$ to the LDU decomposition of $M$ and updating notations, we
find that $M=LDU$, where
\bea\label{eq_ldul}
L_{pm}=\delta_{pm}-\bar{\lambda}\frac{f_{p-1}(\lambda \bar{\lambda})}{f_p(\lambda\bar{\lambda})} \delta_{p,m+1},~p,m \geq 0,\\
\label{eq_ldud}
D_{mm}= (m+1)! \frac{f_{m+1}(\lambda \bar{\lambda})}{f_m(\lambda \bar{\lambda})},~m\geq 0,\\
 U_{mq}=\delta_{mq}-\lambda 
 \frac{f_{m-1}(\lambda \bar{\lambda})}{f_m(\lambda\bar{\lambda})} \delta_{q,m+1},~m,q \geq 0.\label{eq_lduu}
\eea
Using the relation (\ref{eq_ip1}) between the LDU decomposition and the inner products of the bi-orthogonal
polynomials, we conclude that
\bea\label{eq_ip2}
\langle P_p, Q_p \rangle =  
(p+1)! \frac{(p+2)e_{p+1}(x)-
x e_{p}(x)}{(p+1)e_{p}(x)-x e_{p-1}(x)},
\eea
which coincides with (\ref{eq_monip}) at the point $x=0$, as it should. 
\subsubsection{The inner products of
the bi-orthogonal polynomials and the pre-factor in  (\ref{eqd11_gen}) }
Now we can calculate the factor in front of the determinant in the r. h. s.  of
(\ref{eqd11_gen}). Using the relation (\ref{eq_ip2}) we find
\bea
\prod_{q=0}^{N-2}
\langle P_q ,Q_q \rangle =
\prod_{q=0}^{N-2}
D_{qq} = \prod_{q=1}^{N-1}q! \cdot f_{N-1}(\lambda \bar{\lambda}).
\eea
Therefore, 
\bea
\pi^{N-1}\frac{N!}{Z_N} \prod_{q=0}^{N-2}\langle P_q ,Q_q \rangle \cdot e^{-\lambda \bar{\lambda}}
=\frac{f_{N-1}(\lambda \bar{\lambda})}{\pi }e^{-\lambda \bar{\lambda}},
\eea
which allows us to make the operation of $\hat{T}$ on the inner product explicit.

\subsubsection{Inversion of the $L$ and $U$ factors, the bi-orthogonal polynomials and the kernel}
The inverse of the lower-triangular matrix $L$ (resp. upper triangular matrix $U$) is a lower
(resp. upper) triangular matrix. The corresponding matrix elements can be computed
directly from the relations $LL^{-1}=I$, $U U^{-1}=I$ using the explicit expressions
(\ref{eq_ldul}) for the decomposition factors.  The answer is
\bea\label{eq_luinv}
(L^{-1})_{pq}=\left\{
\begin{array}{cc}
0& q>p,\\
1& q=p,\\
\bar{\lambda}^{p-q} \frac{f_q(x)}{f_p(x)}&q<p,
\end{array}
\right.
~~~
(U^{-1})_{pq}=\left\{
\begin{array}{cc}
\lambda^{q-p} \frac{f_p(x)}{f_q(x)}& q>p,\\
1& q=p,\\
0&q<p.
\end{array}
\right.
\eea
Substituting (\ref{eq_luinv}) and (\ref{eq_ldud}) into the formula (\ref{eq_redker}) 
we find that
\bea\label{eq_redker1}
\kappa^{(N+1)}(\bar{\mu}, \nu\mid \lambda,\bar{\lambda})=G^{(N)} 
\left(\lambda\bar{\lambda},\frac{\bar{\mu}}{\bar{\lambda}},\frac{\nu}{\lambda}\right),
\eea
where
\bea
G^{(N)}(x,y,z)
= \sum_{m,n=0}^N f_m(x) f_n(x) y^m 
z^n \sum_{k=m\vee n}^N \frac{x^k}{(k+1)! f_{k}(x)f_{k+1}(x)}
\eea
is a function on $\C^3$ and $a\vee b:=\max(a,b)$.
Even though we do not use explicit expressions for the bi-orthogonal polynomials in the paper, we 
record them here for future use:
\begin{eqnarray}
P_{k}(z\mid \lambda, \lambdab)=Q_{k}(z\mid \lambda, \lambdab)=\sum_{m=0}^k 
\lambda^{k-m}\frac{f_{m}(\lambda\lambdab)}{f_k(\lambda \lambdab)}z^m.
\end{eqnarray}
\subsubsection{Simplification of the reduced kernel for $N<\infty$.}
The above form of the reduced kernel is not well suited  for studying the large-$N$ asymptotic
of the overlaps. In particular, we do not see how to calculate the large-$N$ limit of the kernel
directly from (\ref{eq_redker1}). Fortunately, it can be considerably simplified
via a sequence of lucky cancellations yielding formula (\ref{thm1_redkern}).

The inner sum in (\ref{eq_redker1}) can be simplified as follows: 
Let $\Phi_n: \C\rightarrow \C$ be such that 
\bea
\Phi_n(x):=\sum_{k=0}^n \frac{x^k}{(k+1)!f_{k}(x)f_{k+1}(x)},
\eea
where we define $\Phi_{-1}\equiv 0$.
Then
\bea\label{kern1}
G^{(N)}(x,y,z)
= \sum_{m,n=0}^N f_m(x) f_n(x) y^m 
z^n \left(\Phi_{N}(x)-\Phi_{(m\vee n)-1}(x)\right).
\eea
We have the following key technical result:
\begin{lemma}\label{thm_step1}
\bea\label{phin}
\Phi_{n}(x)=\frac{(n+2-x)}{x^2f_{n+1}(x)}+ \frac{x-1}{x^2},~n=0,1,\ldots
\eea
\end{lemma}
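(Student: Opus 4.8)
The plan is to prove the identity by induction on $n$, using only the three-term recursion satisfied by the polynomials $f_p$ of (\ref{fpols}). First I would rewrite the linear recursion (\ref{eq_lrec}) for $r_p=p!\,f_p$ in terms of the $f_p$ directly: dividing $r_{p+1}+pxr_{p-1}=(2+x+p)r_p$ by $p!$ gives
\begin{equation}\label{eq_frec_plan}
(p+1)f_{p+1}(x)+x f_{p-1}(x)=(2+x+p)f_p(x),\qquad p\geq 1,
\end{equation}
together with $f_0\equiv1$ and $f_1(x)=2+x$. Since the definition of $\Phi_n$ gives the telescoping increment $\Phi_n(x)-\Phi_{n-1}(x)=\frac{x^n}{(n+1)!\,f_n(x)f_{n+1}(x)}$, matching increments with the claimed formula forces the auxiliary polynomial identity
\begin{equation}\label{eq_aux_plan}
(n+2-x)\,f_n(x)-(n+1-x)\,f_{n+1}(x)=\frac{x^{n+2}}{(n+1)!},\qquad n\geq 0,
\end{equation}
so (\ref{eq_aux_plan}) --- which is dictated rather than guessed --- is really what has to be established.

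To prove (\ref{eq_aux_plan}) I would induct on $n$. The case $n=0$ is a one-line check from $f_0=1$ and $f_1=2+x$. For the inductive step, use (\ref{eq_frec_plan}) at $p=n$ to eliminate $f_{n+1}$ from the left-hand side of (\ref{eq_aux_plan}); after multiplying through by $n+1$ and collecting terms, the coefficient of $f_n$ is $(n+1)(n+2-x)-(n+1-x)(n+2+x)=x(x-n)$, and what remains is exactly $x$ times the left-hand side of (\ref{eq_aux_plan}) at level $n-1$. By the inductive hypothesis this equals $x\cdot\frac{x^{n+1}}{n!}=\frac{x^{n+2}}{n!}$, giving (\ref{eq_aux_plan}) at level $n$ after dividing by $n+1$.

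Finally, to deduce Lemma \ref{thm_step1}, let $\Psi_n(x)$ denote the claimed right-hand side $\frac{n+2-x}{x^2 f_{n+1}(x)}+\frac{x-1}{x^2}$. The base case $\Psi_0(x)=\frac{1}{2+x}=\Phi_0(x)$ is immediate, and
\begin{equation*}
\Psi_n(x)-\Psi_{n-1}(x)=\frac{1}{x^2}\left(\frac{n+2-x}{f_{n+1}(x)}-\frac{n+1-x}{f_n(x)}\right)=\frac{(n+2-x)f_n(x)-(n+1-x)f_{n+1}(x)}{x^2 f_n(x)f_{n+1}(x)}=\frac{x^n}{(n+1)!\,f_n(x)f_{n+1}(x)}
\end{equation*}
by (\ref{eq_aux_plan}), which is precisely $\Phi_n(x)-\Phi_{n-1}(x)$; hence $\Psi_n\equiv\Phi_n$ for every $n$. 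All the displayed relations are identities of rational functions --- equivalently, polynomial identities after clearing $x^2$ --- so the apparent poles at $x=0$ are spurious and cause no trouble.

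\textbf{Main obstacle.} There is essentially no conceptual difficulty: the whole argument reduces to (\ref{eq_aux_plan}), and that identity is itself forced by the requirement that $\Psi$ and $\Phi$ share the same telescoping increments. The one point that needs care is the algebra in the inductive step for (\ref{eq_aux_plan}) --- in particular verifying that $(n+1)(n+2-x)-(n+1-x)(n+2+x)$ collapses to the single term $x(x-n)$, which is exactly the cancellation that makes the recursion close --- together with the routine bookkeeping of the index ranges ($p=n\geq 1$ in (\ref{eq_frec_plan}), $n\geq 1$ for the telescoping).
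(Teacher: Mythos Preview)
Your proof is correct and follows essentially the same inductive strategy as the paper: both arguments reduce to the polynomial identity $(n+2-x)f_n(x)-(n+1-x)f_{n+1}(x)=\frac{x^{n+2}}{(n+1)!}$ (the paper writes it with indices shifted by one), and then feed it into the telescoping increment $\Phi_n-\Phi_{n-1}$. The only minor difference is that the paper verifies this identity by direct expansion in terms of the exponential polynomials $e_p$, whereas you prove it by a second induction using the three-term recursion (\ref{eq_frec_plan}) for the $f_p$; your route is slightly more self-contained since it never unpacks $f_p$ into $e_p$.
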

\begin{proof}

For a fixed value of $x$, the sequence 
\bea
\Phi_n(x)=\sum_{k=0}^n \frac{x^k}{(k+1)!f_{k}(x)f_{k+1}(x)}
\eea
satisfies the following difference equation:
\bea
\Phi_{n+1}(x)&=&\Phi_n(x)+\frac{x^{n+1}}{(n+2)!f_{n+1}(x)f_{n+2}(x)},\label{diff_eqn}\\
\Phi_0(x)&=&\frac{1}{2+x}.\label{icond}
\eea
Using 
$f_1(x)=2+x$, it is easy to check that the expression (\ref{phin}) satisfies the initial condition (\ref{icond}).
Assuming that $\Phi_n$ is given by (\ref{phin}), we find from the equation (\ref{diff_eqn}) that
\bea
\Phi_{n+1}=\frac{x-1}{x^2}+\frac{(n+2-x)f_{n+2}(x)+\frac{x^{n+3}}{(n+2)!}}{x^2f_{n+1}(x)f_{n+2}(x)}.
\eea
A direct calculation based on the definitions (\ref{expols}) for the exponential polynomials $e_n$
and
 (\ref{fpols}) for the polynomials $f_n$
 confirms that
 \bea
 (n+2-x)f_{n+2}(x)+\frac{x^{n+3}}{(n+2)!}=(n+3-x)f_{n+1}(x).
 \eea
 Therefore,
 \bea
 \Phi_{n+1}=\frac{x-1}{x^2}+\frac{(n+3-x)}{x^2f_{n+2}(x)},
 \eea
 and Lemma \ref{thm_step1} is proved by induction.
\end{proof}
Substituting (\ref{phin}) into (\ref{kern1}), we find
\bea\label{kern2}
G^{(N)}(x,y,z)&=&\frac{(N+2-x)}{x^2 f_{N+1}(x)}
\left(\sum_{m=0}^N f_{m}(x)y^m\right)
\left(\sum_{n=0}^N f_{n}(x)z^n \right)\nonumber\\
&&+ \frac{1}{x^2}\sum_{m,n=0}^N (x-(m \vee n)-1)\frac{f_m(x) f_n(x)}{f_{m \vee n}(x)} y^m z^n.
\eea
Let 
\bea
\alpha_{n}(x,y):=\sum_{m=0}^n f_{m}(x)y^m, m=0,1,\ldots.
\eea
Then, the first term in the r.h.s.  of (\ref{kern2}) is equal to 
\bea\label{comp1}
\frac{(N+2-x)}{x^2 f_{N+1}(x)}\alpha_N(x,y)\alpha_N(x,z).
\eea
The second term can be also be expressed in terms of $\alpha_n$'s: 
\begin{eqnarray*}
&& \frac{1}{x^2}\sum_{m,n=0}^N (x-(m \vee n)-1)\frac{f_m(x) f_n(x)}{f_{m \vee n}(x)} y^m z^n=
\frac{1}{x^2}\sum_{n=0}^N (x- n-1)f_n(x)  (y z)^n
\\
&&+\frac{1}{x^2}\sum_{m>n\geq 0}^N (x-m -1) f_n(x) y^m z^n
+\frac{1}{x^2}\sum_{0\leq m<n}^N (x-n-1)f_m(x)  y^m z^n
\end{eqnarray*}
\bea\label{comp2}
=\frac{1}{x^2}\left(\left(x-\omega \frac{\partial}{\partial \omega}-1\right)\alpha_N(x,\omega)\mid_{\omega=yz}
+\psi_x(y,z)+\psi_x(z,y)\right),
\end{eqnarray}
where
\bea
\psi_x(y,z)=\sum_{n> m\geq 0}^N (x-n-1) f_m(x)y^m z^n.
\eea
Next,
\begin{eqnarray}
\psi_x(y,z)&=&\left(x-z\frac{\partial}{\partial z}-1  \right)
\sum_{n> m\geq 0}^N f_m(x)y^m z^n\nonumber\\
&=&\left(x-z\frac{\partial}{\partial z}-1  \right)\sum_{ m= 0}^N  f_m(x)y^m \sum_{n=m+1}^N z^n\nonumber\\
&=&\left(x-z\frac{\partial}{\partial z}-1  \right)\sum_{ m= 0}^N  f_m(x)y^m \left(\frac{z^{N+1}-z^{m+1}}{z-1}\right)\nonumber\\
\label{comp3}
&=&\left(x-z\frac{\partial}{\partial z}-1  \right)\frac{1}{(z-1)} \left( z^{N+1}\alpha_N(x,y)-z
\alpha_N(x,yz)\right) .
\end{eqnarray}
Substituting (\ref{comp3}) into (\ref{comp2}) and then substituting the result and (\ref{comp1}) into 
(\ref{kern2}), we find that
\bea
G^{(N)}(x,y,z)
&=&\frac{(N+2-x)}{x^2f_{N+1}(x)}\alpha_N(x,y)\alpha_N(x,z)
+\frac{1}{x^2}\left(x-\omega \frac{\partial}{\partial \omega}-1\right)
\alpha_N(x,\omega)\mid_{\omega=yz}
\nonumber\\
&&+\frac{1}{x^2}\left(x-z \frac{\partial}{\partial z}-1\right)\frac{z}{z-1}\left(z^N\alpha_N(x,y)-\alpha_N(x,yz)\right)
\nonumber\\
\label{kern3}
&&+\frac{1}{x^2}\left(x-y \frac{\partial}{\partial y}-1\right)\frac{y}{y-1}\left(y^N\alpha_N(x,z)-\alpha_N(x,yz)\right).
\eea
To simplify the expression for $G^{(N)}$ further,
we need an expression for $\alpha_N$ in terms of the exponential polynomials:
\bea
\alpha_N(x,y)&
=&\sum_{n=0}^N f_n(x)y^n=\sum_{n=0}^N\left((n+1)e_{n}(x)-xe_{n-1}(x) \right)
y^n
\nonumber\\
&=&\left(\frac{\partial}{\partial y}y-xy \right)\sum_{n=0}^N e_n(x)y^n+xy^{N+1} e_N(x)
\nonumber\\
&=&\left(\frac{\partial}{\partial y}y-xy \right)\frac{e_N(yx)-y^{N+1}e_N(x)}{1-y}+xy^{N+1} e_N(x).
\eea
Explicitly,
\bea\label{alphan}
\alpha_N(x,y)=
\frac{e_N(yx)}{(1-y)^2}-\frac{yx}{(1-y)}\frac{(yx)^N}{N!}-
((N+2-x)-(N+1-x)y)\frac{y^{N+1}e_N(x)}{(1-y)^2}.
\eea
Substituting (\ref{alphan}) into (\ref{kern3}), computing the derivatives and
grouping the terms according to the denominators we arrive at
\bea\label{kern001}
x^2G^{(N)}(x,y,z)=\frac{T^{(N)}_A(x,y,z)}{(1-y)^2(1-z)^2}
+\frac{T^{(N)}_B(x,y,z)}{(1-y)(1-z)}
+\frac{T^{(N)}_C(x,y,z)}{(1-y)^2(1-z)}
+\frac{T^{(N)}_C(x,z,y)}{(1-z)^2(1-y)},\quad
\eea
where
\bea\label{kern001Ta}
T^{(N)}_A(x,y,z)&=&\frac{(N+2-x)}{f_{N+1}(x)}e_{N}(yx)e_{N}(zx)-e_N(xyz)\\
&&+\frac{(N+2)}{f_{N+1}(x)(N+1)!}
\left((zx)^{N+1}e_{N}(yx)+(yx)^{N+1}e_N(zx)-(xyz)^{N+1}e_N(x)\right)
,
\nonumber
\eea
\bea\label{kern001Tb}
T^{(N)}_B(x,y,z)&=&\frac{(N+2-x)}{f_{N+1}(x)}\frac{(yx)^{N+1}}{N!}\frac{(zx)^{N+1}}{N!}
+xe_N(xyz)+(N+1-x)\frac{(xyz)^{N+1}}{N!}
\\
&&
-\frac{(N+2)(N+1-x)}{f_{N+1}(x)(N+1)!}\nonumber\\
&&\times\left(\frac{(zx)^{N+1}(yx)^{N+1}}{N!}
+\frac{(yx)^{N+1}(zx)^{N+1}}{N!}+(N+1-x)(xyz)^{N+1}e_N(x)\right)
,
\nonumber
\eea
\bea\label{kern001Tc}
&&
T^{(N)}_C(x,y,z)=-\frac{(N+2-x)}{f_{N+1}(x)}e_{N}(yx)\frac{(zx)^{N+1}}{N!}+\frac{(xyz)^{N+1}}{N!}
+\frac{(N+2)}{f_{N+1}(x)(N+1)!}
\\
&&\quad\quad\times
\left((N+1-x)(zx)^{N+1}e_N(yx)-\frac{(yx)^{N+1}(zx)^{N+1}}{N!}
-(N+1-x)e_N(x)(xyz)^{N+1}
\right)
.
\nonumber
\eea
A straightforward simplification of each of the $T$-terms gives:
\bea\label{kern002Ta}
f_{N+1}(x)T^{(N)}_A(x,y,z)&=&(N+2)(e_{N+1}(yx)e_{N+1}(zx)-e_{N+1}(xyz)e_{N+1}(x))\\
&&-x(e_{N}(yx)e_{N}(zx)-e_{N}(xyz)e_{N}(x))
\nonumber
,
\eea
\bea\label{kern002Tb}
f_{N+1}(x)T^{(N)}_B(x,y,z)&=&x\frac{(yx)^{N+1}(zx)^{N+1}}{N!(N+1)!}
+x(N+1-x)e_N(x)\frac{(xyz)^{N+1}}{(N+1)!}\\
&&+xe_N(xyz)\left((N+2-x)e_N(x)+(N+2)\frac{x^{N+1}}{(N+1)!}\right)
\nonumber
,
\eea
\bea\label{kern002Tc}
f_{N+1}(x)T^{(N)}_C(x,y,z)=x\frac{(xyz)^{N+1}}{(N+1)!}e_N(x)-xe_{N}(yx)\frac{(zx)^{N+1}}{(N+1)!}.
\eea
Substituting (\ref{kern002Ta}),  (\ref{kern002Tb}) and (\ref{kern002Tc}) into (\ref{kern001})
we arrive at
\bea\label{kern4}
x^2 f_{N+1}(x)G^{(N)}(x,y,z)&=&\frac{(N+2)W_{N+1}(x,y,z)-xW_{N}(x,y,z)}{(1-y)^2(1-z)^2}
\nonumber\\
&&+x\frac{(xyz)^{N+1}e_{N}(x)-(xz)^{N+1}e_{N}(xy)}{(N+1)!(1-y)^2(1-z)}
\nonumber\\
&&+x\frac{(xyz)^{N+1}e_{N}(x)-(xy)^{N+1}e_{N}(xz)}{(N+1)!(1-z)^2(1-y)}
\nonumber\\
&&-x\frac{(xyz)^{N+1}}{(N+1)!}\frac{e_{N+1}(x)+xe_{N}(x)}{(1-y)(1-z)},
\eea
where
\bea
W_N(x,y,z):=e_N(xy)e_N(xz)-(1-x(1-y)(1-z))e_N(xyz)e_N(x),~ N\in \mathbb{N}. 
\eea
This answer is already well-suited for the calculation of the kernel for the bulk and the edge
scaling limits of the overlaps, but it still looks rather complicated. Fortunately, it can be re-written in
a shorter form: 
Observing that $e_{n}(x)=e_{n+1}(x)-\frac{x^{n+1}}{(n+1)!}$, we find that
\bea\label{temp33}
W_{N}(x,y,z)&=&W_{N+1}(x,y,z)-e_{N+1}(xy)\frac{(xz)^{N+1}}{(N+1)!}
-e_{N+1}(xz)\frac{(xy)^{N+1}}{(N+1)!}\nonumber\\
&&+\left(e_{N+1}(xyz)\frac{(x)^{N+1}}{(N+1)!}
+e_{N+1}(x)\frac{(xyz)^{N+1}}{(N+1)!}\right)\left(1-x(1-y)(1-z)\right)\nonumber\\
&&+x(1-y)(1-z)\frac{(yzx^2)^{N+1}}{((N+1)!)^2}.
\eea
Expressing $W_{N+1}$ and $W_{N}$ in (\ref{kern4}) in terms of $W_{N+2}$ and $W_{N+1}$, 
using (\ref{temp33}) one finds
\bea\label{kern5}
x^2 f_{N+1}(x)G^{(N)}(x,y,z)&=&\frac{(N+2)W_{N+2}(x,y,z)-xW_{N+1}(x,y,z)}{(1-y)^2(1-z)^2}
\nonumber\\
&&- \frac{x}{(1-y)(1-z)}\frac{(xyz)^{N+2}}{(N+1)!}e_{N+2}(x).
\eea
It is straightforward to check that
\bea\label{temp34}
- x(1-y)(1-z)\frac{(xyz)^{N+2}}{(N+1)!}e_{N+2}(x)=(N+2)H_{N+2}(x,y,z)-xH_{N+1}(x,y,z),
\eea
where
\bea\label{temp35}
H_N(x,y,z):=\frac{(1-y)(1-z)}{N!} \frac{x^{N+1}e_N(xyz)-(xyz)^{N+1} e_N(x)}{(yz-1)}.
\eea
It follows from (\ref{kern5}, \ref{temp34}), that 
\bea\label{kern6}
x^2 f_{N+1}(x)G^{(N)}(x,y,z)
=\frac{(N+2)\frak{F}_{N+2}(x,y,z)
-x\frak{F}_{N+1}(x,y,z)}{(1-y)^2(1-z)^2},
\eea
where
\bea
\frak{F}_n(x,y,z):=W_{n}(x,y,z)+H_{n}(x,y,z), n ~\in \mathbb{N},
\eea
is a function on $\C^3$ defined in (\ref{total}).
Finally, substituting (\ref{kern6}) into (\ref{eq_redker1}) we arrive
at the expression (\ref{thm1_redkern}) for the reduced kernel. 
Theorem \ref{thm_fn} is proved.\\
\\
{\bf Remark.} The final part of the proof following (\ref{kern4}) is not very satisfying as it is both 
non-obvious and reliant on 
unexpected cancellations. A more direct route to (\ref{kern6}) is to
substitute the partition of unity $1=\ind_{m<n}+\ind_{m\geq n}$ into the double sum
in (\ref{kern1}), represent the indicator functions as a contour integral,
\[
\ind_{m<n}=\oint \frac{dz}{2\pi i } \frac{z^{m-n}}{1-z},
\]
and analyse the resulting expression for $G_N(x,y,z)$ as a sum of two contour integrals.
The integrands of each of the integrals contain poles of order $N$ as well 
as poles of order $1$ and $2$. It turns out that the contributions from the high order poles 
cancel, and the sum of contributions from the poles of low order gives (\ref{kern6}), see \cite{sakis} for details.
\subsection{The proof of Corollary \ref{thm_bulk}}\label{proof_bulk}
The proof is based on the following elementary remark: for any fixed $x \in \C$ 
\bea
\lim_{N\rightarrow \infty} e_N(x)=e^x.
\eea
Consequently,
\bea
\frac{f_{N}(x)}{N}=
(1+N^{-1})e_{N}(x)-N^{-1}x e_{N-1}(x)\stackrel{N\uparrow \infty}{\longrightarrow} e^{x}.
\eea
Therefore, the factor in front of the determinant in the r.h.s. of  (\ref{d11exact}) divided by
$N$ converges
for $|\lambda_1|^2<N$
 to
\bea\label{d11prefacbulk}
\lim_{N\rightarrow \infty}\frac{f_{N-1}(|\lambda_1|^2)}{N\pi }
e^{-|\lambda_1|^2}=\frac{1}{\pi},
\eea
as is well known from the Ginibre ensemble, cf. \cite{KS}. 
The large-$N$ limit of the reduced kernel defined in (\ref{thm1_redkern}) is 
most easily taken when fixing all arguments of the kernel, that is remaining in the vicinity of the origin, as the spectral edge is located at $\sqrt{N}$.
The same bulk limit close to  the origin was taken already in Ginibre's original paper
for the complex eigenvalue correlations \cite{ginibre}. For our kernel we thus have
\bea\label{redkerprime}
\lim_{N\rightarrow \infty} \kappa^{(N)}(\bar{x},y\mid \lambda,\lambdab)
&=&
e^{-\lambda \lambdab}  \frac{\lim_{N\rightarrow \infty}
\frak{F}_{N+1}\left(\lambda \lambdab,\frac{\bar{x}}{\lambdab}, \frac{y}{\lambda}\right)}
{(\bar{x}-\lambdab)^2(y-\lambda)^2}\nonumber
\\
&=& \frac{e^{\bar{x}y}}{(\bar{x}-\lambdab)^2(y-\lambda)^2}
\left( e^{-(\bar{x}-\bar{\lambda})(y-\lambda)}-1+(\bar{x}-\bar{\lambda})(y-\lambda)   \right)
\nonumber\\
&=& \frac{e^{\bar{x}y}}{(\bar{x}-\lambdab)^2(y-\lambda)^2}
 e^{(2)}(-(\bar{x}-\bar{\lambda})(y-\lambda)),
\eea
where $e^{(m)}(x):=\sum_{n=m}^\infty \frac{x^n}{n!},~m=0,1,\ldots$
In the second equality we used the definition (\ref{total}) of the polynomials
$\frak{F}_{N}$. Therefore, the bulk scaling limit of the kernel $K^{(N)}_{11}$ defined in (\ref{thm1_kern}) is
\bea\label{kernconj}
\lim_{N\rightarrow \infty} K^{(N)}_{11}(x,\bar{x},y,\bar{y}\mid \lambda,\bar{\lambda})
&=&\frac{e^{-x\bar{x}+y\bar{x}}}{\pi}\left(1+(x-\lambda)(\bar{x}-\lambdab)\right) 
\frac{ e^{(2)}(-(\bar{x}-\bar{\lambda})(y-\lambda))}{(\bar{x}-\lambdab)^2(y-\lambda)^2}
\nonumber\\
&=&e^{(y-x)\lambdab}
\frac{1}{\pi}(1+(x-\lambda)(\bar{x}-\lambdab))
e^{-(x-\lambda)(\bar{x}-\lambdab)}
\nonumber\\
&&\times 
\left.
\frac{d}{dz} \left(\frac{e^{z}-1}{z}\right)\right|_{z=(\bar{x}-\lambdab)(y-\lambda)}.
\eea
Notice that the factor $e^{(y-x)\lambdab}$ in the r.h.s. of (\ref{kernconj}) corresponds to 
the conjugation of the kernel $K(\lambda_{i},\lambda_j)\rightarrow \phi(\lambda_i) K(\lambda_i, \lambda_j)\phi^{-1}(\lambda_j)$, which does not change the value of the determinant in the expression (\ref{d11exact}) for the conditional overlap. This remark allows us to write the bulk scaling limit of the kernel as
\bea\label{kernunconj}
K^{(bulk)}_{11}(x,\bar{x},y,\bar{y}\mid \lambda,\bar{\lambda})
=
\frac{1}{\pi}(1+(x-\lambda)(\bar{x}-\lambdab))e^{-(x-\lambda)(\bar{x}-\lambdab)}
\left.
\frac{d}{dz} \left(\frac{e^{z}-1}{z}\right)\right|_{z=(\bar{x}-\lambdab)(y-\lambda)}.\nonumber\\
\eea
Expressions (\ref{d11prefacbulk}), (\ref{kernunconj}) solve the problem of computing the large-$N$
limit of (\ref{d11exact}), thus proving claims (\ref{ovlp11}), (\ref{thm_k11bulk}), (\ref{thm_weight11bulk})
and (\ref{thm_redker11bulk}) of Corollary \ref{thm_bulk}.

It remains to calculate the bulk scaling limit of $D_{12}^{(N,k)}$ starting with its determinantal 
representation (\ref{thm_d12exact}).  Using (\ref{redkerprime}) and (\ref{kernconj}), one finds
that the large-$N$ limit of the pre-factor in (\ref{thm_d12exact}) divided by $N$ is
\bea\label{d12bulkprefac}
\lim_{N\rightarrow \infty} \frac{e^{-|\lambda_1|^2-|\lambda_2|^2}}{N\pi^2}f_{N-1}(\lambda_1\bar{\lambda}_2)
\kappa^{(N-1)}(\bar{\lambda}_1,\lambda_2\mid \lambda_1,\bar{\lambda}_2)=
\frac{\kappa^{(bulk)} (\lambdab_1, \lambda_2\mid \lambda_1, \lambdab_2)}{\pi^2} ,
\eea
where $\kappa^{(bulk)}$ is defined in (\ref{thm_redker11bulk}), and
\bea\label{kernel12conj}
\lim_{N\rightarrow \infty} K^{(N-1)}_{12}(x,\bar{x},y,\bar{y}\mid u, \bar{u},v,\bar{v})
&=&
e^{(y-x)\bar{v}} 
\frac{\omega^{(bulk)}(x,\bar{x}\mid u,\bar{v})}{\kappa^{(bulk)}(\bar{u},v\mid u,\bar{v})} \nonumber\\
&&\times \det\left(
\begin{array}{cc}
\kappa^{(bulk)}(\bar{u}, v\mid u,\bar{v}) &\kappa^{(bulk)}(\bar{u}, y\mid u,\bar{v})\\
\kappa^{(bulk)}(\bar{x}, v\mid u,\bar{v}) & \kappa^{(bulk)}(\bar{x}, y\mid u,\bar{v})
\end{array}
\right),\quad\quad
\eea
where the weight $\omega^{(bulk)}$ is defined by (\ref{thm_weight11bulk}).
Calculating the large-$N$ limit of (\ref{thm_d12exact}) with the help of (\ref{d12bulkprefac})
and (\ref{kernel12conj}), and using the fact that conjugation of the kernel by $e^{\lambda_i \lambdab_2}$
does not change the determinant, we arrive at  the characterisation (\ref{ovlp12}), (\ref{eq_ker12bulk})
for the bulk scaling limit of the off-diagonal conditional overlaps. Corollary \ref{thm_bulk} is proved.
\subsection{Corollary \ref{thm_edge}}\label{proof_edge}
The calculation is based on the following two asymptotic formulae:
Let us fix $a,b \in \C$. Then
\bea\label{ase1}
\log \frac{(N+\sqrt{N}a+b)^{N+1}}{(N+1)!}&=&(N+\sqrt{N}a+b)-\frac{1}{2}\log2\pi N\nonumber\\
&&-\frac{a^2}{2}
+\frac{a-ab+\frac{1}{3}a^3}{\sqrt{N}}+O(N^{-1}),\\
e_{N+k}(N+\sqrt{N}a+b)&=&e^{N+\sqrt{N}a+b}\left(F(a)+\frac{e^{-\frac{a^2}{2}}}{\sqrt{2\pi N}}
\left(\frac{a^2}{3}-b+k+\frac{2}{3}\right)+O(N^{-1}) \right),\nonumber\\
\label{ase2}
\eea 
where $k=0,1,2,\ldots$. Here $F$ is a rescaling on 
the complementary error function defined in (\ref{eq_erfc}).
The derivation of the above formulae 
is based on Stirling's formula and
the following well known integral representation of 
the exponential polynomials
in terms of the incomplete Gamma-function:
\bea
e_{n}(x)
=e^x\frac{\Gamma(n+1,x)}{\Gamma(n+1)}=\frac{e^x}{n!}  \int_x^\infty t^{n} e^{-t} dt,~n=0,1,2\ldots,
\eea
see \cite[Chapter 8.11.10 ]{olver} for more details.
The calculations leading to 
(\ref{ase1}) and (\ref{ase2}) are straightforward, but lengthy due to the fact that
we need to know the asymptotic expansion of $e_{N+k}$ and $\log \frac{(N+\sqrt{N}a+b)^{N+1}}{(N+1)!}$
up to and including the terms of order $N^{-1/2}$, 
see also \cite{olver}.

As a consequence of (\ref{ase2}),
\bea\label{fedge}
f_{N+1}(N+\sqrt{N}a+b)&=&\sqrt{\frac{N}{2\pi}} e^{N+\sqrt{N}a+b-\frac{a^2}{2}} 
\nonumber\\
&&\times\left(1-a\sqrt{2\pi} e^{\frac{a^2}{2}}F(a)\right) (1+O(N^{-\frac{1}{2}})), a,b \in \C.\nonumber\\
\eea
Now we are ready to calculate the edge scaling limit of conditional overlaps.
Let 
\bea
x^{(N)}=e^{i\theta}(\sqrt{N}+x),\nonumber\\
y^{(N)}=e^{i\theta}(\sqrt{N}+y),\label{edge_coord}\\
\lambda^{(N)}=e^{i\theta}(\sqrt{N}+\lambda),\nonumber
\eea
where $x,y,\lambda \in \C$ 
are of order unity.
Then the edge scaling limit of the factor multiplying the determinant in the r.h.s. of (\ref{d11exact})
down-scaled by $N^{-1/2}$ 
is
\bea\label{prefac_edge11}
\lim_{N\rightarrow \infty} \frac{e^{-|\lambda^{(N)}|^2}}{\pi \sqrt{N}}f_{N-1}(|\lambda^{(N)}|^2)
=
\frac{1}{\sqrt{2\pi^3}}
\left(e^{-\frac{1}{2}\left(\lambda+\lambdab \right)^2}-\sqrt{2\pi}\left(\lambda+\lambdab\right)F(\lambda+\lambdab) \right), ~\lambda, \lambdab \in \C^2.\quad
\eea
The derivation of (\ref{prefac_edge11}) is based on (\ref{ase2}). 
Note that (\ref{prefac_edge11}) is valid for any pair of complex numbers $(\lambda,\lambdab)$, not
just on the real surface $\lambda=\bar{\lambdab}$, which makes it suitable for the calculation of 
both the diagonal and the off-diagonal overlaps. 

To find the edge scaling limit of the kernel $K_{11}^{(N)}$ we substitute the expressions (\ref{edge_coord})
into the formula for the kernel 
\begin{eqnarray*}
K_{11}^{(N)}(x^{(N)},\bar{x}^{(N)},y^{(N)},\bar{y}^{(N)}\mid \lambda^{(N)},\lambdab^{(N)})
&=&\frac{1+(x^{(N)}-\lambda^{(N)})(\bar{x}^{(N)}-\lambdab^{(N)})}{\pi} 
e^{-x^{(N)}\bar{x}^{(N)}}\\ 
&&\times
G^{(N-1)}\left( \lambda^{(N)} \lambdab^{(N)},
\frac{\bar{x}^{(N)}}{\lambdab^{(N)}}, \frac{y^{(N)}}{\lambda^{(N)}}\right),
\end{eqnarray*}
where $G^{(N)}$ is given by formula (\ref{kern5}), and compute the large-$N$ asymptotics using (\ref{ase1}), (\ref{ase2})
and (\ref{fedge}).  The result follows from another lengthy computation and is
\bea
K_{11}^{(N)}(x^{(N)},\bar{x}^{(N)},y^{(N)},\bar{y}^{(N)}\mid \lambda^{(N)},\lambdab^{(N)})
=e^{\sqrt{N}(y-x)}
\frac{1+(x-\lambda)(\bar{x}-\lambdab)}{\pi} 
e^{-x\bar{x}+\bar{x}y}\nonumber\\ 
\times
\frac{
H\left(\lambda+\lambdab,\lambda+\bar{x},\lambdab+y,y+\bar{x},(\lambda-y)(\lambdab-\bar{x})\right)}
{(\lambda-y)^2(\bar{\lambda}-\bar{x})^2}
\left(1+O(N^{-1/2})\right),
\eea
where the function $H$ is defined in (\ref{hedge}). We see that there the large-$N$ limit of $K_{11}^{(N)}$
at the edge does not exist, but fortunately, the residual $N$-dependence can be eliminated by the $N$-dependent
conjugation 
\bea\label{ndep_conj}
K(x,\bar{x}, y, \bar{y}\mid \lambda, \lambdab)\rightarrow e^{\sqrt{N}x}
K(x,\bar{x}, y, \bar{y}\mid \lambda, \lambdab) e^{-\sqrt{N}y},
\eea
which does not change the value of the conditional overlap. Therefore we can conclude that
\bea\label{kern33edge}
&&K_{11}^{(edge)}(x,\bar{x},y,\bar{y}\mid \lambda,\lambdab):=\lim_{N\rightarrow \infty} e^{\sqrt{N}x}
K_{11}^{(N)}(x^{(N)},\bar{x}^{(N)},y^{(N)},\bar{y}^{(N)}\mid \lambda^{(N)},\lambdab^{(N)})e^{-\sqrt{N}y}
\nonumber\\
&&=
\frac{1+(x-\lambda)(\bar{x}-\lambdab)}{\pi} 
e^{-x\bar{x}+\bar{x}y} 
\frac{
H\left(\lambda+\lambdab,\lambda+\bar{x},\lambdab+y,y+\bar{x},(\lambda-y)(\lambdab-\bar{x})\right)}
{(\lambda-y)^2(\bar{\lambda}-\bar{x})^2}.\nonumber\\
\eea
Substituting (\ref{prefac_edge11}) and (\ref{kern33edge}) into the edge scaling limit (\ref{def_d11edge})
of the conditional overlap $D_{11}^{(N,k)}$ we arrive at the statement 
(\ref{d11_edge}), (\ref{limkeredge}),
(\ref{limweightedge}) and  (\ref{limredkeredge}) of the Corollary \ref{thm_edge}.

To find the edge scaling limit of the off-diagonal overlap $D_{12}^{(N,k)}$, we need to 
substitute its expression (\ref{thm_d12exact}) into (\ref{def_d12edge}) and calculate the large-$N$
limit of the resulting sequence.  As before, the calculation reduces to the evaluation of the scaling
limits of the pre-factor in the r.h.s. of (\ref{thm_d12exact}) and the kernel (\ref{kern12}).

A straightforward computation based on (\ref{fedge}, \ref{kern33edge}) gives 
\bea\label{prefac12_edge}
&&\lim_{N\rightarrow \infty}
(-1)\frac{e^{-|\lambda^{(N)}_1|^2-|\lambda^{(N)}_2|^2}}{\pi^2\sqrt{N}}f_{N-1}(\lambda^{(N)}_1\lambdab^{(N)}_2)
\kappa^{(N-1)}(\lambdab^{(N)}_1,\lambda^{(N)}_2\mid \lambda^{(N)}_1,\lambdab^{(N)}_2)
\nonumber\\
&=&
-\frac{e^{-|\lambda_{12}|^2-\frac{1}{2}(\lambda_1+\lambdab_2)^2}}{\sqrt{2\pi^5}\lambda_{12}^2\lambdab_{12}^2}
\left(1-\sqrt{2\pi}(\lambda_1+\lambdab_2)e^{\frac{1}{2}(\lambda_1+\lambdab_2)^2}F(\lambda_1+\lambdab_2)\right)
\nonumber\\
&&\times H(\lambda_1+\lambdab_2,\lambda_1+\lambdab_1,\lambda_2+\lambdab_2,
\lambda_2+\lambdab_1,-\lambda_{12}\lambdab_{12}).
\eea
Similarly, introducing in addition to (\ref{edge_coord}), $u^{(N)}=e^{i\theta}(\sqrt{N}+u)$, 
$v^{(N)}=e^{i\theta}(\sqrt{N}+v)$, we find that
\bea\label{kern12_penult}
&&K_{12}^{(N)} (x^{(N)},\bar{x}^{(N)},y^{(N)},\bar{y}^{(N)}\mid u^{(N)},\bar{u}^{(N)},v^{(N)},\bar{v}^{(N)})
\nonumber\\
&=&
e^{\sqrt{N}(y-x)}K_{12}^{(edge)} (x,\bar{x},y,\bar{y}\mid u,\bar{u},v,\bar{v})
\left(1+O(N^{-1/2})\right),
\eea
where $K_{12}^{(edge)}$ is given by (\ref{eq_ker12edge}).
Substituting (\ref{prefac12_edge}) and (\ref{kern12_penult}) into the right hand side
of (\ref{def_d12edge}), performing the $N$-dependent conjugation (\ref{ndep_conj}) 
of the kernel inside the resulting determinant and computing the $N\rightarrow \infty$ limit,
we arrive at the statements (\ref{d12_edge}), (\ref{eq_ker12edge}) of the Corollary \ref{thm_edge}.
Corollary \ref{thm_edge} is proved.
\subsection{The proof of corollary \ref{thm_edge_bulk}}\label{proof_edge_to_bulk}
It follows from (\ref{ovlp11}), (\ref{ovlp12}), (\ref{d11_edge}) and (\ref{d12_edge})
that
\bea
D_{11}^{(bulk,~k)}(\lv^{(k)})&=&D_{11}^{(bulk,~1)}(\lambda_1)\det_{2\leq i,j\leq k}(K^{(bulk)}_{11}(\lambda_i,
\bar{\lambda}_i,
\lambda_j,\bar{\lambda}_j\mid \lambda_1,\bar{\lambda}_1))
\label{ovlp11prime},~k\geq 1,
\\
D_{12}^{(bulk,~k)}(\lv^{(k)})&=&D_{12}^{(bulk,~2)}(\lambda_1,\lambda_2) \det_{3\leq i,j\leq k}(K^{(bulk)}_{12}(\lambda_i,
\bar{\lambda}_i,
\lambda_j,\bar{\lambda}_j\mid \lambda_1,\lambdab_1,\lambda_2,\bar{\lambda}_2)),
\nonumber\\
~k\geq 2,
\label{ovlp12prime}
\\
D_{11}^{(edge,~k)}(\lv^{(k)})&=&D_{11}^{(edge,~1)}(\lambda_1)\det_{2\leq i,j\leq k}(K^{(edge)}_{11}(\lambda_i,
\bar{\lambda}_i,\lambda_j,\bar{\lambda}_j\mid \lambda_1,\bar{\lambda}_1)), ~k\geq 1,
\label{ovlp11edgeprime}
\\
D_{12}^{(edge,~k)}(\lv^{(k)})&=&D_{12}^{(edge,~2)}(\lambda_1,\lambda_2)\det_{3\leq i,j\leq k}(K^{(edge)}_{12}(\lambda_i,
\bar{\lambda}_i,
\lambda_j,\bar{\lambda}_j\mid \lambda_1,\lambdab_1,\lambda_2,\bar{\lambda}_2)),\nonumber
\\
~k\geq 2,
\label{ovlp12edgeprime}
\eea
where the expressions for all the relevant kernels are given in Corollaries \ref{thm_bulk} and \ref{thm_edge}.
We see that the ratios 
\bea
\frac{D_{11}^{(edge,~k)}}{D_{11}^{(edge,~1)}}, \frac{D_{12}^{(edge,~k)}}{D_{12}^{(edge,~2)}},
\frac{D_{11}^{(bulk,~k)}}{D_{11}^{(bulk,~1)}}\mbox{ and } \frac{D_{12}^{(bulk,~k)}}{D_{12}^{(bulk,~2)}}
\eea
are
completely determined by the conjugacy classes of the corresponding kernels.
Therefore, the claim of the Corollary \ref{thm_edge_bulk} is an immediate consequence  of the following relations
between the kernels:
\bea
\lim_{R\rightarrow -\infty}  e^{(R+\lambdab)x}K_{11}^{(edge)}(R+x,R+\bar{x},R+y,R+\bar{y}\mid R+\lambda_1,R+\lambdab_1)e^{-(R+\lambdab)y}
\nonumber\\
=
K_{11}^{(bulk)}(x,\bar{x},y,\bar{y}\mid \lambda_1,\lambdab_1),
\nonumber\\
\label{ker10011}
\\
\lim_{R\rightarrow -\infty} K_{12}^{(edge)}(R+x,R+\bar{x},R+y,R+\bar{y}\mid R+\lambda_1,R+\lambdab_1,
R+\lambda_2,R+\lambdab_2)\nonumber\\
=K_{12}^{(bulk)}(x,\bar{x},y,\bar{y}\mid \lambda_1,\lambdab_1, \lambda_2,\lambdab_2),\nonumber\\
\label{ker10012}
\eea
Both (\ref{ker10011}) and (\ref{ker10012}) can be derived from the following asymptotic formula 
for the $H$-function (\ref{hedge}):
\bea\label{htokappa}
H(-\epsilon^{-1}+a,-\epsilon^{-1}+b,-\epsilon^{-1}+c,-\epsilon^{-1}+d,f)=
\left(e^{-f}-1+f\right)\left(1+O\left(\epsilon\right) \right),
\eea
where $\epsilon>0, (a,b,c,d,f) \in \C^5$. This formula follows directly from
the standard asymptotic for the complementary error function (\ref{eq_erfc}): 
for $a<0$,
\bea
F(a)=1+O\left(e^{-a^2/2}\right),~ F'(a)=O\left(e^{-a^2/2}\right).
\eea
The proof of (\ref{ker10011}):
\begin{eqnarray*}
&&\lim_{R\rightarrow -\infty} e^{(R+\lambdab)x}K_{11}^{(edge)}
(R+x,R+\bar{x},R+y,R+\bar{y}\mid R+\lambda,R+\lambdab)e^{-(R+\lambdab)y}\\
&=&-
\lim_{R\rightarrow -\infty}  e^{\lambdab x} \left( \frac{1+(x-\lambda)(\bar{x}-\lambdab)}{\pi z^2}
e^{\bar{x}(y-x)}
H\left(2R+a,2R+b,2R+c,2R+d,z\right)\mid_{z=(\bar{x}-\lambdab)(y-\lambda)}\right) e^{-\lambdab y}
\\
&=&  
\left.
\frac{1+(x-\lambda)(\bar{x}-\lambdab)}{\pi }e^{-|x-\lambda|^2}
\frac{1-(1-z)e^z}{z^2}\right|_{z=(\bar{x}-\lambdab)(y-\lambda)}
=K_{11}^{(bulk)}(x,\bar{x},y,\bar{y}\mid \lambda,\lambdab).
\end{eqnarray*}
To prove (\ref{ker10012}), let us first notice that (\ref{htokappa}) leads to
\begin{eqnarray}\label{kappaprep}
\kappa^{(edge)}(R+\bar{x},R+y\mid R+\lambda,R+\lambdab)=
e^{R^2+R(\bar{x}+y)-|\lambda|^2+\lambdab y+\lambda \bar{x}}
\kappa^{(bulk)}(\bar{x},y\mid \lambda,\lambdab)\left(1+O(R^{-1})\right),\nonumber\\
\end{eqnarray}
where $R<0.$ Also,
\bea\label{weightprep}
\omega^{(edge)}(R+x,R+\bar{x}\mid R+\lambda, R+\lambdab)
&=&\frac{1+(x-\lambda)(y-\lambdab)}{\pi}e^{-|R+x|^2}
\nonumber\\
&=&e^{-R^2-R(x+\bar{x})+\lambda\lambdab-\bar{x}\lambda-x\lambdab}
\omega^{(bulk)}(x,\bar{x}\mid \lambda, \lambdab).\quad
\eea
Notice that the last two relations are still valid if $\kappa^{(edge)}$ and $\omega^{(edge)}$ are treated as functions
on $\C^4$.
Substituting 
(\ref{kappaprep}), (\ref{weightprep}) into (\ref{eq_ker12edge}) we find
\bea
K_{12}^{(edge)}(R+x,R+\bar{x},R+y,R+\bar{y}\mid R+\lambda_1,R+\lambdab_1,R+\lambda_2,R+\lambdab_2)
\nonumber\\
=
e^{(R+\lambdab_2)(y-x)} K_{12}^{(bulk)}(x,\bar{x},y,\bar{y}\mid \lambda_1,\lambdab_1,\lambda_2,\lambdab_2)
\left(1+O(R^{-1})\right).
\eea
Therefore
\begin{eqnarray*}
\lim_{R\rightarrow -\infty} e^{(R+\lambdab_2)x}K_{12}^{(edge)}(R+x,R+\bar{x},R+y,R+\bar{y}\mid R+\lambda_1,R+\lambdab_1,
R+\lambda_2,R+\lambdab_2)e^{-(R+\lambdab_2)y} \nonumber\\
=K_{12}^{(bulk)}(x,\bar{x},y,\bar{y}\mid \lambda_1,\lambdab_1, \lambda_2,\lambdab_2).
\end{eqnarray*}
Equation (\ref{ker10012}) is established and therefore the
Corollary \ref{thm_edge_bulk} is proved.
\subsection{The proof of relations (\ref{prodd11}), (\ref{prodd12}) and Corollary \ref{thm_alg_dec}}\label{proof_decay}
We shall  start with deriving (\ref{prodd11}), (\ref{prodd12}).
According to Corollary \ref{thm_bulk},
\bea\label{dec1}
D_{11}^{(bulk,~k)}(\lv^{(k)})=\frac{1}{\pi^k} \prod_{i=2}^k \left(\frac{1+|\lambda_{i1}|^2}{|\lambda_{i1}|^4}e^{-|\lambda_{i1}|^2}\right) \det_{2\leq m,n\leq k} \left[1-(1-\bar{\lambda}_{m1}\lambda_{n1})
e^{\lambdab_{m1}\lambda_{n1}} \right].
\eea
Let $\mathbf{1}$ be a $(k-1)$-dimensional column vector with all components equal to $1$. Let $M, A$ be $(k-1) \times (k-1)$
matrix.  Let $\alpha$ be a constant. The next two identities follow 
directly from the block determinant formula \eqref{blockId}:
\bea
\det(M-\mathbf{1} \otimes \mathbf{1})&=&
\det \left(\begin{array}{c|c}
1&\mathbf{1}^T\\
\hline
\mathbf{1} &M
\end{array}
\right),
\\
\det \left(\begin{array}{c|c}
1&\mathbf{1}^T\\
\hline
\mathbf{1} &\alpha A
\end{array}
\right)
&=&\alpha^{k-2}
\det \left(\begin{array}{c|c}
\alpha&\mathbf{1}^T\\
\hline
\mathbf{1} &A
\end{array}
\right).
\eea
Applying the identities to the determinant in the r.h.s.  of (\ref{dec1}), one finds
\bea\label{dec2}
D_{11}^{(bulk,~k)}(\lv^{(k)})
&=&-\left(-\frac{1}{\pi}\right)^{k} \prod_{i=2}^k \left(\frac{1+|\lambda_{i1}|^2}{|\lambda_{i1}|^4}e^{-|\lambda_{i1}|^2}\right) 
\det_{2\leq m,n\leq k}\left(
\begin{array}{c|c}
1& \mathbf{1}^T\\
\hline
&\\
\mathbf{1} &(1-\bar{\lambda}_{m1}\lambda_{n1})
e^{\lambdab_{m1}\lambda_{n1}}
\end{array}\right)
\nonumber\\
&=&-\left(-\frac{1}{\pi}\right)^{k} \prod_{i=2}^k \left(\frac{1+|\lambda_{i1}|^2}{|\lambda_{i1}|^4}e^{-|\lambda_{i1}|^2+|\lambda_1|^2}\right) e^{-|\lambda_1|^2} \nonumber\\
&&\times \det_{2\leq m,n\leq k}\left(
\begin{array}{c|c}
e^{|\lambda_1|^2}& \mathbf{1}^T\\
\hline
&\\
\mathbf{1} &(1-\bar{\lambda}_{m1}\lambda_{n1})
e^{\lambdab_{m1}\lambda_{n1}-|\lambda_1|^2}
\end{array}\right).
\eea
Let
\bea
\frak{D}_m:=1-\lambda_{m1}\frac{\partial}{\partial \lambda_m}, m=1, \ldots k,
\eea
be a first order differential operator. Clearly, $$[\frak{D}_m,\frak{D}_n]=0,$$ for $m,n\geq 1$.
Observe also that 
\[
\frak{D}_pe^{\lambdab_{m1} \lambda_{n1}}=\left(1-\delta_{pn}\lambdab_{m1}\lambda_{n1}\right)
e^{\lambdab_{m1} \lambda_{n1}},~p,m,n \geq 1.
\]
These observations lead to the following identity:
\bea
\det
\left(
\begin{array}{c|c}
e^{|\lambda_1|^2}& \mathbf{1}^T\\
\hline
&\\
\mathbf{1} &(1-\bar{\lambda}_{m1}\lambda_{n1})
e^{\lambdab_{m1}\lambda_{n1}-|\lambda_1|^2}
\end{array}\right)
&=&\prod_{m=2}^k \frak{D}_{m}
\det
\left(
\begin{array}{c|c}
e^{|\lambda_1|^2}& \mathbf{1}^T\\
\hline
&\\
\mathbf{1} &
e^{\lambdab_{m1}\lambda_{n1}-|\lambda_1|^2}
\end{array}\right)
\nonumber\\
&=&\prod_{m=2}^k \left(\frak{D}_{m}e^{-\lambdab_m\lambda_1 -\lambdab_1\lambda_m}\right)
\det_{1\leq p,q \leq k}\left(
e^{\lambdab_p\lambda_q}
\right).\quad\quad\quad
\label{dec3}
\eea
Substituting (\ref{dec3}) into (\ref{dec2}) and simplifying we find:
\bea\label{dec4}
D_{11}^{(bulk,~k)}(\lv^{(k)})
=\left(-1\right)^{k-1}  \prod_{i=2}^k \left(\frac{1+|\lambda_{i1}|^2}{|\lambda_{i1}|^4}\right) 
\prod_{n=2}^k e^{-|\lambda_{n1}|^2} \frak{D_{n}}e^{|\lambda_{n1}|^2} \rho^{(bulk,~k)} (\lv^{(k)}),
\eea
where $\rho^{(bulk,~k)}$ is the bulk scaling limit of the $k$-point correlation function (\ref{cgin1}),
\bea
\rho^{(bulk,~k)}(\lv^{(k)})=\prod_{m=1}^k\frac{e^{-|\lambda_m|^2}}{\pi} \det_{1\leq i,j\leq k}
\left( e^{\lambdab_i \lambda_j}\right),
\eea
 see 
  Ginibre's classical paper \cite{ginibre}.
 Notice that
 \[
  e^{-|\lambda_{n1}|^2} \frak{D_{n}}e^{|\lambda_{n1}|^2}=\frak{D}_n-|\lambda_{n1}|^2.
 \]
 Substituting this formula into
 (\ref{dec4}) we arrive at the representation (\ref{prodd11}). 
 
 The easiest way to prove (\ref{prodd12})
 is to recall that the bulk scaling limits $D_{11}^{(bulk,~k)}$ and $D_{12}^{(bulk,~k)}$ are still related via 
 the formula (\ref{rltn}) of Lemma \ref{thm_rel}. Applying the relation to both sides of (\ref{prodd11}) and  noticing that
 \bea
 \hat{T}\rho^{(bulk,~k)}(\lv^{(k)})=-e^{|\lambda_{12}|^2} \rho^{(bulk,~k)}(\lv^{(k)}),
 \eea
 we obtain formula (\ref{prodd12}).
 
Now we are ready to prove Corollary \ref{thm_alg_dec}. The decay of
correlations for the complex Ginibre ensemble is Gaussian, 
\bea\label{as_rho}
\rho^{(bulk,~k)}(\lv^{(k)})=\pi^{-k}+O(e^{-|\lambda_{ij}|^2}, 1\leq i<j \leq k).
\eea
Therefore, for well separated eigenvalues, 
\bea\label{estm}
\frac{\partial}{\partial \lambda_m}\rho^{(bulk,~k)}(\lv^{(k)})=O(e^{-L^2}),~1\leq m\leq k,
\eea
where $L=\inf_{1\leq i\neq j \leq k} |\lambda_{ij}|$. Substituting (\ref{as_rho}) and (\ref{estm})
into (\ref{prodd11}), (\ref{prodd12}) we immediately arrive at the exact algebraic asymptotic
of conditional overlaps stated in Corollary \ref{thm_alg_dec}.

\section{Summary and Open Problems}\label{conc}

We have analysed the overlap between left and right eigenvectors in the complex Ginibre ensemble of random matrices, conditioned on $k$ complex eigenvalues. 
Starting from the results of Chalker and Mehlig we used a combination of the inversion of the moment matrix and theory of orthogonal polynomials in the complex plane,  to arrive at a determinantal structure for the diagonal overlap. It is valid for finite matrix size $N$ and fixed $k$ and is explicitly given in terms of a kernel, containing combinations of exponential polynomials. Its analyticity led us to deduce the off-diagonal overlap as a $k\times k$ determinant and its kernel as well. 

These findings allowed us to take the microscopic limit both in the bulk and the edge of the spectrum. Both bulk and edge kernel were explicitly derived and conjectured to be universal. 
For the bulk we restricted ourselves to the vicinity of the origin, but due to the translational invariance of the limiting kernel we expect to find the same answer everywhere in the bulk of the spectrum. 
At the edge we found a residual rotational symmetry of the kernel, that is independent of the angle where at the circular edge at $|z|=\sqrt{N}$ we take the limit.
At large argument separation the bulk limit answers 
also allowed us to derive the algebraic 
decay of the conditional overlaps and establish their asymptotic factorisation. 

It is an open question if the determinantal structures that we found can also be obtained in more general ensembles at finite-$N$, such as is products of Ginibre matrices, see \cite{BSV} for some conjectures (as well as \cite{Belinschi} for large-$N$), or in a more general non-Gaussian setting.  This will be a formidable task because, for example, for products of random matrices more and more off-diagonal elements of the moment matrix emerge. 

A further question is concerning other symmetry classes. Whilst the most difficult real Ginibre ensemble has been addressed in \cite{bourgade,fyodorov}, for the quaternionic Ginibre ensemble so far only first steps have been taken \cite{Dubach,AFK}. The structures found by Chalker and Mehlig persist, and the same result in the global, macroscopic regime is found  for the diagonal and off-diagonal overlap. Note however, that already in the symmetry class of complex matrices these are non-universal when going to non-Gaussian ensembles, cf. \cite{Belinschi}. It remains to be seen if a Pfaffian structure similar to the one in the present work can be found. In principle, the building block, a Pfaffian formula for expectation values of products of characteristic polynomials prevails \cite{AB}, that generalises the Christoffel type theorem \cite{akemann_vernizzi} from orthogonal to skew orthogonal polynomials in the complex plane.

Our motivation was, apart from finding integrable structures, the coupled 
stochastic motion of the complex eigenvalues and corresponding eigenvectors. Its further investigation is left for future work. 
\\

\noindent
{\bf Acknowledgements.} We would like to thank Yan Fyodorov and Mario Kieburg for fruitful discussions at an early stage of this project. Furthermore, partial support from the following grants is gratefully acknowledged:
CRC1283 ``Taming uncertainty and profiting from randomness and low regularity in analysis, stochastics and their applications"  by  
 the German Research Foundation DFG for Gernot Akemann, Roger Tribe is partially supported by a Leverhulme Research Fellowship RF-2-16-655.

\begin{appendix}
\section{Appendix: Normal Evolutions}\label{appendix}
The fact that  
Brownian motion with values in normal matrices is related to  
the $2$-dimensional log-gas seems to be a common albeit unpublished knowledge. Here we present
the verification of this fact just to emphasise the difference between the evolution equations
and the `Dysonian dynamics' for the complex Ginibre ensemble studied in \cite{grela}.

Let $M=\{H\in \C^{N^2}\mid [H,H^\dagger]=0\}$ be the space of $N\times N$
normal matrices. Notice that the space $M$ is not linear. It is not a smooth manifold
either - there are singularities at the points of degeneration of the spectrum of $H$.
It must still be possible to construct an $M$-valued Brownian motion, provided that
it does not visit the singular points. We can attempt to define this process by using
the Laplace-Beltrami operator (which exists on the complement to the set of singular points)
as the generator. Let $M_c$ be the complement to the set of singularities of $M$.
For that we need to choose good local coordinates on $M$. 
A reasonable choice would be
\bea
H=U\Lambda U^\dagger,
\eea
where $\Lambda$ is a complex diagonal matrix 
with entries $\lambda^1, \lambda^2, \ldots, \lambda^N$ and $U \in U(N)/U(1)^N$. 
The
Riemannian metric on $M_c$ is induced by the embedding $M_c\subset \C^{N^2}$,
\bea
G(\delta H, \delta H)=\tr \delta H \delta H^\dagger.
\eea
But for $H\in M$,
\bea
\delta H=U(\delta \Lambda+[\delta g,\Lambda])U^\dagger,
\eea
where $\delta g=U^\dagger \delta U \in Lie(U(N))$, with the additional restriction
$\delta g^{ii}=0$, $i=1,2,\ldots, N$. Therefore,
\bea
G(\delta H, \delta H)=
\sum_{i} |\delta \Lambda_{ii}|^2+2\sum_{i<j} |\lambda^i-\lambda^j|^2|\delta g^{ij}|^2.
\eea
(In this Section only we use superscripts to label $\lambda$'s in line with the labelling convention for local 
coordinates in differential geometry.)
We see that the metric tensor is diagonal in $(\Lambda, U)$-coordinates and that
\bea
\sqrt{\det(G)}=2^{\frac{N(N-1)}{2}}|\Delta^{(N)} (\Lambda)|^2.
\eea
The inversion of $G$ is straightforward and the generator of the Brownian motion on $M_c$
is given by
\bea
L=\frac{1}{|\Delta^{(N)} (\Lambda)|^2}\sum_{i=1}^N\left(\frac{\partial}{\partial \lambda^i}
|\Delta^{(N)} (\Lambda)|^2\frac{\partial}{\partial \overline{\lambda}^i}+
\frac{\partial}{\partial \overline{\lambda}^i}
|\Delta^{(N)} (\Lambda)|^2\frac{\partial}{\partial \lambda^i}+\frac{1}{2}\sum_{i<j}
\frac{1}{|\lambda^i-\lambda^j|^2}\frac{\partial}{\partial g^{ij}}\frac{\partial}{\partial \bar{g}^{ij}}
\right)\!\!.
\nonumber\\
\eea
There are two obvious points to notice: (i) The dynamics of eigenvalues is Markovian; (ii) Unitary degrees of freedom speed up near the singular points, i.e. at $\lambda^i=\lambda^j$ for some $i \neq j$ .

The generator for the eigenvalue dynamics is
\bea
L=\frac{1}{\Delta^{(N)} (\Lambda)}\sum_{i=1}^N\frac{\partial}{\partial \lambda^i}
\frac{\partial}{\partial \overline{\lambda}^i}
\Delta^{(N)} (\Lambda)+\frac{1}{\Delta^{(N)} (\overline{\Lambda})}\sum_{i=1}^N\frac{\partial}{\partial \lambda^i}
\frac{\partial}{\partial \overline{\lambda}^i}
\Delta^{(N)} (\bar{\Lambda}).
\eea

The corresponding system of SDE's is 
\bea\label{sden1}
d\lambda^i=2\sum_{k\neq i} \frac{1}{\overline{\lambda}^i-\overline{\lambda}^k} dt+\sqrt{2} dW^i_t,\\
\label{sden2}
d\lambdab^i=2\sum_{k\neq i} \frac{1}{\lambda^i-\lambda^k} dt+\sqrt{2} d\overline{W}^i_t,
\eea
where $\{W^i_t, \overline{W}^i_t\}_{t\geq 0}^{1\leq i\leq N}$ are independent complex Brownian motions
with non-zero covariances $\E\left(\bar{W}_t^i W_s^j\right)=\delta^{ij}s \wedge t$. Equations (\ref{sden1})
and (\ref{sden2}) are valid until the time of the first exit from $M_c$, which we conjecture to be infinite, but we do not
pursue the argument here. 

It is well known \cite{chau}, that the $t=1$ marginal distribution of eigenvalues for normal
evolutions (the so called normal random matrix model) coincides with the law (\ref{cginibrelaw}) for the complex
Ginibre eigenvalues. (This correspondence breaks down for models with 
non-Gaussian 
potentials.)  
Yet the evolution equations (\ref{sden1}, \ref{sden2}) are very obviously different from the equations for
the joint evolution of eigenvalues and eigenvectors for the complex Ginibre evolutions derived in \cite{grela}. 
\end{appendix}

\end{document}